\documentclass[12pt]{article}
\usepackage{graphicx}
\usepackage{color}
\usepackage{amsmath}
\usepackage{amssymb}
\usepackage{amscd}
\usepackage{amsthm}
\usepackage{amsopn}
\usepackage{xspace}
\usepackage{verbatim}
\usepackage{amsmath}
\usepackage{amsfonts}
\usepackage{epic}
\usepackage{eepic}
\usepackage{stmaryrd}
\usepackage{empheq}
\usepackage{multirow}
\usepackage{colortbl}
\usepackage[active]{srcltx}
\usepackage[a4paper,margin=3cm]{geometry}
\definecolor{red}{rgb}{1,0,0}
\definecolor{green}{rgb}{0,1,0}
\definecolor{SeaGreen}{RGB}{46,139,87}
\definecolor{Maroon}{RGB}{128,0,0}

%\definecolor{olive}{named}{OliveGreen}
%\definecolor{forest}{named}{ForestGreen}
%%\definecolor{blue}{named}{RoyalBlue}
%%\definecolor{violet}{named}{Violet}
%% \definecolor{jungle}{named}{JungleGreen}
%% %% \definecolor{yellow}{named}{Yellow}
%% \definecolor{brown}{named}{Brown}
%% \definecolor{violetred}{named}{VioletRed}

% \hoffset=- 2cm
% \voffset=- 0cm
% \setlength{\textwidth}{16cm}
% \setlength{\textheight}{22cm}

\def\qed{\hbox {\hskip 1pt \vrule width 4pt height 6pt depth 1.5pt
        \hskip 1pt}\\}

\newcommand{\Supp}{\textrm{Supp~}}
\newcommand{\N}{\mathbb{N}}

\newcommand{\C}{{\mathbb{C}}}
\newcommand{\R}{{\mathbb{R}}}

\newcommand{\A}{{\mathcal A}}

\def\Jg {{\mathcal J}}

\def\Eg {{\mathcal E}}
\newcommand{\LL}{\mathcal L}

\def\Sg {{\mathcal S}}

\renewcommand {\Re}{{\rm Re\,}}

\newcommand {\pa}{\partial}

\newcommand {\ar}{\to}

\def\Ai{\text{\rm Ai\,}}

\def\0{\mathbf  0}

\def\XXint#1#2#3{{\setbox0=\hbox{$#1{#2#3}{\int}$ }
\vcenter{\hbox{$#2#3$ }}\kern-.6\wd0}}

\newcommand{\Union}{\mathop{\bigcup}\limits}

\makeatletter \makeatother

%\@addtoreset{equation}{section}
%\renewcommand{\theequation}{\thesection.\arabic{equation}}
%\def\baselinestretch{2}

\errorcontextlines=0 \numberwithin{equation}{section}
\theoremstyle{plain}
\newtheorem{theorem}{Theorem}[section]
\newtheorem{lemma}[theorem]{Lemma}

\newtheorem{proposition}[theorem]{Proposition}
\newtheorem{assumption}[theorem]{Assumption}
\newtheorem{definition}[theorem]{Definition}
\newtheorem{remark}[theorem]{Remark}

\EmphEqdelimiterfactor=1100
\EmphEqdelimitershortfall=8.0pt
\newcommand {\jf}{{\bf j}}

\title{Spectral semi-classical analysis of a complex Schr\"odinger operator in  exterior domains} 

\author{ Y. Almog, Department of Mathematics, Louisiana State University,\\
    Baton Rouge, LA 70803, USA,\\~\\
 D. S. Grebenkov, 
 Laboratoire de Physique de la Mati\`ere Condens\'ee, \\ 
CNRS and Ecole Polytechnique, F-91128 Palaiseau, France,
\\
  and \\
  B. Helffer, Laboratoire de Math\'ematiques Jean Leray, \\CNRS and Universit\'e de Nantes, \\
  2 rue de la Houssini\`ere, 44322 Nantes Cedex France.}

\date{}

\begin{document}
\bibliographystyle{siam}

\maketitle
\begin{abstract}
  Generalizing previous results obtained for the spectrum of the
  Dirichlet and Neumann realizations in a bounded domain of a
  Schr\"odinger operator with a purely imaginary potential $-h^2\Delta+iV$
  in the semiclassical limit \break $h\to0$ we address the same problem
  in exterior domains. In particular we obtain the left margin of the
  spectrum, and the emptiness of the essential part of the spectrum
  under some additional assumptions.
\end{abstract}

\section{Introduction}
\label{sec:1}

Let $\Omega = K^c$, where $K$ is a compact set with smooth boundary in
$\mathbb R^d$ with $d\geq 1$. Consider the operator
\begin{subequations}
  \label{eq:1D}
  \begin{equation}
\A_h^D = -h^2\Delta + i\, V \,,
\end{equation}
defined on
\begin{equation}
\label{eq:2D}
  D(\A_h^D)= \{ u \in H^2 (\Omega) \cap H_0^1(\Omega) \,,\, V u \in L^2(\Omega)\}\,,
\end{equation}
\end{subequations} 
or
\begin{subequations}
  \label{eq:1N}
  \begin{equation}
\A_h^N = -h^2\Delta + i\, V \,,
\end{equation}
defined on
\begin{equation}
\label{eq:2N}
  D(\A_h^N)= \{ u \in H^2 (\Omega) \,,\, V u \in L^2(\Omega)\,,\,  \partial_\nu u = 0 \mbox{ on } \partial \Omega\} \,,
\end{equation}
\end{subequations}
where $V$ is a $C^\infty$-potential in $\overline{\Omega}$, $\nu$ is
pointing outwards of $\Omega$.\\
The quadratic forms respectively read
\begin{equation}
\label{eq:formR}
u \mapsto  q_V^D(u) := h^2 \,\| \nabla u\|^2_{ \Omega} 
  + i \int_{ \Omega} V(x)  | u(x) |^2 \, dx \,,
\end{equation}
where the form domain is
\begin{equation*}
\mathcal V^D (\Omega) =\{ u\in H^1_0(\Omega)\,,\, | V|^\frac 12 u \in L^2(\Omega)\}\,,
\end{equation*}
and 
\begin{equation}
\label{eq:formRN}
u \mapsto  q_V^N(u) := h^2 \,\| \nabla u\|^2_{ \Omega} 
  + i \int_{ \Omega} V(x)  | u(x) |^2 \, dx \,,
\end{equation}
where the form domain is
\begin{equation*}
\mathcal V^N (\Omega) =\{ u\in H^1(\Omega)\,,\, | V|^\frac 12 u \in L^2(\Omega)\}\,.
\end{equation*}
Although the forms are not necessarily coercive when $V$ changes 
sign, a natural definition, via an extended Lax-Milgram theorem, can
be given for $\mathcal A_h^D$ or $\mathcal A_h^N$ under the condition
that there exists $C>0$ such that
\begin{equation}\label{ass0}
|\nabla V(x)| \leq C \sqrt{V(x)^2 +1}\,,\quad  \forall x \in \overline{\Omega}\,.
\end{equation}
We refer to \cite{AlHel, GHH,AGH} for this point and the
characterization of the domain of $\mathcal A_h^\#$, where the
notation $\#$ is used for $D$ (Dirichlet) or $N$ (Neumann). 

Note that (\ref{ass0}) is satisfied for $V(x) =x_1$
(Bloch-Torrey equation) which is our main motivating example (see
\cite{Grebenkov17}).\\
In this last case and when $K = \emptyset$, it has been demonstrated that the
spectrum is empty \cite{Alm,Hen}.  Our aim is now to analyze, when $K$
is not empty, the two following properties
\begin{itemize}
\item 
the emptiness of the essential spectrum, although the resolvent is not
compact when $d\geq 2\,$,
\item 
the non emptiness of the spectrum and the extension of the
semi-classical result of Almog-Grebenkov-Helffer \cite{AGH} concerning
the bottom of the real part of the spectrum.
\end{itemize}
Since the case $d=1$ was analyzed in \cite{GHH,Hen}, we will assume
from now on that
\begin{equation}
d\geq 2\,.
\end{equation}

The study of the spectrum of the operator \eqref{eq:1D} in bounded
domains began in \cite{Alm} where a lower bound on the left margin of
the spectrum has been obtained. In \cite{Hen} the same lower bound has
been obtained using a different technique allowing for resolvent
estimates (and consequently semigroup estimates), that are not
available in \cite{Alm}. In \cite{AlHen} an upper bound for the
spectrum has been obtained under some rather restrictive assumptions
on $V$. In \cite{AGH} these assumptions were removed and an upper
bound (and a lower bound) for the left margin of the spectrum has been
obtained not only for \eqref{eq:1D} but also for \eqref{eq:1N} as well
as for the Robin realization and for the transmission problem,
continuing and relying on some one-dimensional result obtained in
\cite{GHH} and on the formal derivation of the relevant quasimodes
obtained in \cite{GH}. 

The rest of this contribution is arranged as follows: in the next
section we list our main results. Section 3 is devoted to the
emptiness of the essential spectrum of \eqref{eq:1N} under some
conditions on the potential. In some case we confine the essential
spectrum to a certain part of the complex plain whereas in other
cases we show that it is empty. The methods in Section 3 are equally
applicable to \eqref{eq:1D} as well as to the Robin realization and to
the transmission problem. In Section 4 we derive the left margin of
the spectrum in the semi-classical limit, by using the same method as
in \cite{AGH}. In Section 5 we present some numerical results, and in
the last section we emphasize some points which were not addressed
within the analysis.

\section{Main results} \label{s5}

\subsection{Analysis of the essential spectrum.}

It is clear that there is no essential spectrum when $V \to
+\infty$ as \break $|x|\to +\infty$ but we are motivated by
the typical example $V(x)= x_1$ with $d\geq 2\,$, in which case $V$ can
tend to $-\infty$ as well.  To treat this example, we need, in addition t
\eqref{ass0}, the following assumption:

\begin{assumption}\label{ass7} 
There exists $R>0$ such that $K \subset B(0,R)$ and a potential $V_0$
satisfying
\begin{enumerate}
\item 
There exists $C>0$ such that, for $\forall x \in \mathbb R^d\,$, 
\begin{equation}\label{ass3}
\sum_{|\alpha|=2} | \partial_x^\alpha V_0 (x)| \leq C\,  |\nabla V_0|^{2/3}\,,
\end{equation}
\item
There exists $c>0$ such that, for $\forall x \in \mathbb R^d\,$, 
\begin{equation}\label{ass4}
0 < c \leq |\nabla V_0 (x)|\,,
\end{equation}
\end{enumerate}
and such that $V= V_0$ outside of $B(0,R)$.
\end{assumption}
A necessary condition for $V$ at the boundary of $B(0,R)$ is that
$\partial_\nu V =0$ at some point of the boundary.  If not, any $C^1$
extension $V_0$ of $V$ inside $B(0,R)$ has a critical point in
$B(0,R)$.\\
Under Assumption \ref{ass7}, we have:
\begin{theorem}\label{theorem2.3} 
For $\#\in \{D,N\}$, and under Assumption \ref{ass7}, for any $\Lambda
\in \mathbb R$, there exists $h_0 >0$ such that for $h\in (0,h_0]$ the
operator $\mathcal A_h^\#$ has no essential spectrum in $\{z\in
\mathbb C\,\,|\, \Re z \leq \Lambda h^\frac 23\}$.
\end{theorem}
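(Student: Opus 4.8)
The plan is to establish the absence of essential spectrum by a resolvent‐construction argument, showing that for $\Re z \le \Lambda h^{2/3}$ the operator $\mathcal A_h^\# - z$ is semi‐Fredholm with a resolvent that is, up to a compact error, invertible. The starting observation is that the essential spectrum is a property ``at infinity'': by a partition of unity $\chi_0,\chi_\infty$ with $\chi_0$ supported in a large ball $B(0,R')$ (containing $K$) and $\chi_\infty$ supported outside $B(0,R)$, one reduces the question to the full-space operator $\mathcal A_h^{V_0} = -h^2\Delta + i V_0$ on $\mathbb R^d$. Indeed, the interior piece contributes only a relatively compact perturbation (the resolvent of $\mathcal A_h^\#$ localized to a bounded region is compact because on a bounded domain $V$ is bounded and the Laplacian has compact resolvent), so it suffices to prove that $\mathcal A_h^{V_0} - z$ is invertible on $\mathbb R^d$ for $\Re z \le \Lambda h^{2/3}$ with $h$ small, with a norm bound on the inverse that is uniform enough to close the gluing.

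The heart of the matter is therefore the full-space estimate: under Assumption \ref{ass7}, in particular the non-degeneracy \eqref{ass4} $|\nabla V_0|\ge c>0$ and the second-derivative control \eqref{ass3}, one shows
\begin{equation}\label{eq:fullspace}
\|(\mathcal A_h^{V_0} - z) u\| \ge C^{-1} h^{2/3}\,\|u\|\,,\qquad \forall u \in D(\mathcal A_h^{V_0})\,,
\end{equation}
whenever $\Re z \le \Lambda h^{2/3}$, and the same bound for the adjoint, which gives surjectivity. This is the standard mechanism behind the complex-Airy / Bloch--Torrey analysis: near a point $x_0$ one freezes the gradient $g = \nabla V_0(x_0)$, rescales at the scale $h^{2/3}$ in the direction of $g$, and compares the frozen operator $-h^2\Delta + i(V_0(x_0) + g\cdot(x-x_0))$ with the one-dimensional complex Airy operator $-h^2 \frac{d^2}{dt^2} + i|g| t$ on $\mathbb R$, whose spectrum is empty and whose resolvent norm is $O(h^{-2/3})$. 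The error terms from freezing the coefficients are controlled precisely by the Hessian bound \eqref{ass3}, which is calibrated so that the quadratic remainder is lower order after the $h^{2/3}$ rescaling; one then patches the local estimates together with a partition of unity on $\mathbb R^d$ at scale comparable to $h^{2/3}$ (or a slightly larger intermediate scale), absorbing the commutator terms $[\Delta,\chi_j]$ using $|\nabla\chi_j|\lesssim h^{-2/3}$ against the gain in \eqref{eq:fullspace}. This is essentially the argument of \cite{Alm,Hen,AGH} transplanted from the bounded to the unbounded setting, the point being that \eqref{ass3}--\eqref{ass4} make it work uniformly as $|x|\to\infty$.

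With \eqref{eq:fullspace} in hand, the gluing is routine: write an approximate inverse $R_h(z) = \chi_\infty \,(\mathcal A_h^{V_0}-z)^{-1}\,\tilde\chi_\infty + \chi_0\, R_{\mathrm{int}}(z)\,\tilde\chi_0$ where $R_{\mathrm{int}}$ is a genuine or approximate inverse on a bounded region and $\tilde\chi_\#$ are cutoffs equal to $1$ on the support of $\chi_\#$; then $(\mathcal A_h^\# - z) R_h(z) = I + E_h(z)$ where $E_h(z)$ is supported in the bounded transition region, hence compact, and this exhibits $\mathcal A_h^\# - z$ as Fredholm of index zero (the adjoint argument gives the other side), so that $z$ is not in the essential spectrum. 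I expect the main obstacle to be the uniformity of \eqref{eq:fullspace} out to spatial infinity: one must check that the local model estimates and the partition-of-unity errors do not degrade as $|x|\to\infty$, which is exactly where \eqref{ass3} (controlling the Hessian by a \emph{power less than one} of $|\nabla V_0|$) and \eqref{ass4} (keeping $|\nabla V_0|$ bounded below) are essential — without a lower bound on $|\nabla V_0|$ the natural rescaling scale would blow up and the Airy comparison would fail. A secondary technical point is handling the Neumann boundary condition on $\partial\Omega$ in the bounded piece, but since $\partial\Omega$ is smooth and compact this only affects $R_{\mathrm{int}}$ and does not interact with the estimate at infinity.
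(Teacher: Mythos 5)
Your two-step architecture --- (i) reduce to the whole-space operator $-h^2\Delta+iV_0$, (ii) prove a uniform resolvent bound of order $h^{-2/3}$ in $\{\Re z\le \Lambda h^{2/3}\}$ by freezing the gradient, comparing with the complex Airy model, and patching with a partition of unity at an intermediate scale --- coincides with the paper's proof in its analytic core, step (ii): the paper covers $\R^d$ by balls of radius $h^{\varrho}$ with $\tfrac13<\varrho<\tfrac23$, uses the uniform estimate \eqref{unifest} from \cite{Hen} for the affine models \eqref{eq:6}, controls the Taylor remainder by \eqref{ass3} and the commutators by the $h^{-\varrho}$ scaling of the cutoffs, and obtains the error bound \eqref{eq:3}. (One small correction: your ``scale comparable to $h^{2/3}$'' is borderline, since at $\varrho=\tfrac23$ the commutator error is $O(1)$; the ``slightly larger intermediate scale'' you mention parenthetically is in fact mandatory.) Where you genuinely diverge is step (i): you glue a right parametrix from the whole-space resolvent away from $K$ and an interior (approximate) inverse near $K$, reading off Fredholmness of $\mathcal A_h^\#-z$ from a compact error, whereas the paper compares bounded resolvents, proving in Proposition \ref{prop:compact} (via an integration-by-parts/trace estimate on $\partial K$ and Rellich) that the relevant resolvents differ by a compact operator, and then invokes the non-self-adjoint Weyl theorem (Theorem \ref{thm:weyl}, \cite{Sc}) through Remark \ref{rem:empty}, so that the emptiness of the whole-space spectrum in the half-plane transfers directly to $\sigma_{ess}(\mathcal A_h^\#)$. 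Your gluing buys a self-contained, purely local argument (no Weyl-type theorem and no boundary trace estimate needed), at the cost of details you only sketch: the interior inverse must be handled when $z$ hits the discrete interior spectrum (harmless, since such error terms factor through a compact resolvent), the cutoffs must be constant near $\partial\Omega$ so that $R_h(z)$ respects the Dirichlet or Neumann condition, and --- because the paper's definition of the essential spectrum also excludes non-isolated spectral points --- pointwise Fredholmness should be upgraded to discreteness by analytic Fredholm theory on the connected region $\{\Re z\le\Lambda h^{2/3}\}$, which meets the resolvent set $\{\Re z<0\}$ by accretivity; the paper's route sidesteps this last point because its comparison operator has genuinely empty spectrum in that region.
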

We now introduce the stronger condition (which is 
Assumption~\ref{ass7} with $V_0 = \jf x_1$):
 \begin{assumption}\label{ass5}
The potential $V$ is given in $\R^d\setminus K$ by 
\begin{equation*}
V(x)= \jf\,  x_1+\tilde{V}\,,
\end{equation*}
where $\tilde{V}\in C^1(\R^d)$ and satisfies $\tilde{V}\to0$ as
  $|x|\to +\infty$.  
\end{assumption}

\begin{theorem} \label{theorem2.5} 
Under Assumption \ref{ass5}, the operator $\mathcal A_h^\#$ has no
essential spectrum.
\end{theorem}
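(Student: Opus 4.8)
The plan is to deduce Theorem \ref{theorem2.5} from Theorem \ref{theorem2.3} by a scaling/covering argument that removes the restriction $\Re z \le \Lambda h^{2/3}$. First I would observe that Assumption \ref{ass5} is a special case of Assumption \ref{ass7}: choosing $V_0 = \jf x_1$ (so $\nabla V_0 = \jf e_1$ is constant, all second derivatives vanish, and $|\nabla V_0| = |\jf| > 0$), conditions \eqref{ass3} and \eqref{ass4} hold trivially, and $V = V_0$ outside $B(0,R)$ provided $\tilde V$ is compactly supported. When $\tilde V$ only decays at infinity rather than vanishing outside a ball, I would first reduce to the compactly supported case by a relatively compact perturbation argument: write $V = \jf x_1 + \chi_R \tilde V + (1-\chi_R)\tilde V$ with $\chi_R$ a cutoff, note that $(1-\chi_R)\tilde V \to 0$ uniformly as $R\to\infty$, and invoke stability of the essential spectrum under perturbations that are small in $L^\infty$ plus relatively compact — so it suffices to treat $V$ agreeing with $\jf x_1$ outside a ball, i.e. exactly Assumption \ref{ass7}.

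The key point is then to upgrade the $h$-dependent spectral-free region of Theorem \ref{theorem2.3} to the whole plane, exploiting the dilation structure of the model potential $\jf x_1$. For the operator $-h^2\Delta + i\jf x_1$ on $\R^d$ (or more precisely on the exterior domain, away from $K$), conjugation by the unitary dilation $u(x)\mapsto \mu^{d/2} u(\mu x)$ together with the translation $x_1 \mapsto x_1 + a$ shows that the family $\{\mathcal A_h^\#\}$ is, modulo the compact obstacle $K$, covariant under a rescaling that maps $\Re z \le \Lambda h^{2/3}$ onto $\Re z \le \Lambda'$ for arbitrary $\Lambda'$. Concretely, I would fix $\Lambda\in\R$, apply Theorem \ref{theorem2.3} to get $h_0$ with $\sigma_{\mathrm{ess}}(\mathcal A_h^\#)\cap\{\Re z\le \Lambda h^{2/3}\}=\emptyset$ for $h\le h_0$, and then argue that the essential spectrum — being governed only by the behaviour of $V$ near infinity, where $V=\jf x_1$ exactly — is independent of $h$ after rescaling: a Weyl sequence for $\mathcal A_h^\#$ at energy $z$ can be transported, via dilation, to a Weyl sequence for $\mathcal A_{h'}^\#$ at a rescaled energy $z'$, with $h'$ as small as desired. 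Hence the spectral gap near infinity obtained for small $h$ in fact holds for all $h$ and propagates to cover every half-plane $\{\Re z\le \Lambda'\}$, giving $\sigma_{\mathrm{ess}}(\mathcal A_h^\#)=\emptyset$.

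The main obstacle is making the "essential spectrum depends only on the potential near infinity, hence is dilation-covariant" step rigorous in the presence of the obstacle $K$ and the Dirichlet/Neumann boundary condition on $\partial\Omega$. The clean way is a Persson-type lemma: $\inf\{\Re z : z\in\sigma_{\mathrm{ess}}(\mathcal A_h^\#)\}$ equals the supremum over compact sets $\mathcal K$ of the infimum of $\Re q_V^\#(u)/\|u\|^2$ over $u$ supported outside $\mathcal K$. Since such $u$ only sees $V = \jf x_1$, the exterior infimum is computed for the obstacle-free model $-h^2\Delta + i\jf x_1$ on $\R^d$, whose spectrum is empty by \cite{Alm,Hen}; combined with the corresponding resolvent/semigroup bounds from \cite{Hen} this forces the exterior infimum of $\Re$ to be $+\infty$, hence $\sigma_{\mathrm{ess}}(\mathcal A_h^\#)=\emptyset$. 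I would need to check that the Persson characterization is valid for this non-self-adjoint operator — this is where the accretivity furnished by \eqref{ass0} and the form-domain analysis of \cite{AlHel,GHH,AGH} enters — and that the cutoff errors (commutators of $\chi$ with $h^2\Delta$, of size $O(h^2)$ on the support of $\nabla\chi$) are controlled uniformly, which is routine given the quadratic-form setup.
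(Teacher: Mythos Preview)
Your proposal contains a genuine gap in its central step. The Persson-type characterization you invoke is incorrect as stated: for this operator, $\Re q_V^\#(u) = h^2\|\nabla u\|^2$, so the potential $V$ is entirely invisible in the real part of the form. Taking the supremum over compact sets $\mathcal K$ of $\inf\{\Re q_V^\#(u)/\|u\|^2 : \mathrm{supp}\,u\subset \mathcal K^c\}$ yields only $0$, recovering nothing beyond accretivity. So this route cannot possibly show $\sigma_{\mathrm{ess}}=\emptyset$; a non-self-adjoint Persson lemma formulated via the numerical range of the form simply does not detect the imaginary potential.

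Your dilation argument fares no better, because its key input---that the essential spectrum is independent of the obstacle $K$---is exactly what needs to be proved. You correctly flag this as ``the main obstacle'' but then defer it to the faulty Persson step. The paper handles this point directly: it shows that $(\mathcal A_0+1)^{-1}$ on $\R^d$ and $0\oplus(\mathcal A^\#_{\Omega,V_0}+1)^{-1}$ differ by a compact operator, via an integration-by-parts identity that reduces the difference to boundary traces on $\partial K$ (controlled by $H^{3/2}$ norms on a fixed compact neighborhood, hence compact by Rellich). Weyl's theorem for bounded non-self-adjoint operators then transfers $\sigma_{\mathrm{ess}}=\{0\}$ from the $\R^d$ resolvent (where $\sigma(-\Delta+i\jf x_1)=\emptyset$ is already known) to the exterior problem. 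The perturbation $\tilde V$ is absorbed by a resolvent identity and the compactness of multiplication by a function vanishing at infinity from $H^2(\Omega)$ to $L^2(\Omega)$.

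Note also that even if the obstacle-independence were granted, routing through Theorem~\ref{theorem2.3} plus dilation is an unnecessary detour: once $\sigma_{\mathrm{ess}}(\mathcal A_h^\#)$ equals that of the whole-space operator, emptiness is immediate from \cite{Alm,Hen} with no semiclassical input. In fact the paper's proof of Theorem~\ref{theorem2.3} itself invokes the compactness comparison (via Remark~\ref{rem:empty}) to reduce to $K=\emptyset$, so your proposed deduction is logically downstream of the very result you are trying to avoid.
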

  
In other words the spectrum of the operator $\mathcal A_h^\#$ is
either empty, or discrete.  This spectral property of the operator
$\mathcal A_h^\#$ contrasts with a continuous spectrum of the Laplace
operator in the exterior of a compact set.  Adding a purely imaginary
potential $V$ to the Laplace operator drastically changes its spectral
properties.  As a consequence, the limiting behavior of the operator
$-\Delta + ig V$ as $g\to 0$ is singular and violates conventional
perturbation approaches that are commonly used in physical literature
to deal with this problem (see discussion in \cite{Grebenkov17}).
This finding has thus important consequences for the theory of
diffusion nuclear magnetic resonance (NMR).  In particular, the
currently accepted perturbative analysis paradigm has to be
fundamentally revised.
%{\clm For instance, the commonly used Gaussian phase
%approximation for the macroscopic NMR signal in the exterior of a
%compact set (e.g., in the extracellular space) is not applicable at
%high $g$.}\\
%{\clg I don't understand the last statement. I hope that Denis will be
%  ready to clarify what he means by ``Gaussian phase
%approximation'', and why has small $g$ changed into large (high)?}

\begin{remark}
  It is not clear at all whether the spectrum of $-\Delta+i\, \jf x_1 $
  remains empty if we add to it a potential $V$ such that $(-\Delta + i\jf
  x_1)^{-1} V$ is compact (for example $V$ with compact support).  In
  fact, one may construct  a real valued $V\in C^1$ with compact
  support, such that $\sigma(-d^2/dx^2+i(x+V))\neq\emptyset$. However, if we consider the
  operator $\A=-\Delta+ix_1+iV(x^\prime)$ acting on $\R^d$, where $x^\prime\in\R^{d-1}$ so that
  $x=(x_1,x^\prime)$. Since $\A$ is separable in $x_1$ and $x^\prime$ we may write
  \begin{displaymath}
    e^{-t\A}=e^{-t(-\partial^2{x_1}+ix_1)}\otimes e^{-t(-\Delta_{x^\prime}+iV(x^\prime))}\,,
  \end{displaymath}
Consequently (see also \cite[Section 4]{AGH}) we have
\begin{displaymath}
  \|e^{-t\A}\|\leq Ce^{-t^3/12}\,.
\end{displaymath}
It follows that $\sigma(\A)=\emptyset$. If consider the Dirichlet or Neumann
realization of $\A$ in $\Omega$, then we may use we may use the same
procedure detailed in the proof of Theorem \ref{theorem2.5} to
conclude that $\sigma_{ess}(\A^\#)=\emptyset$.
\end{remark}
I have dropped the appendix, but kept the statement that it can be
done. We can go one step further and drop the entire statement, but I
suggest that we still keep something in that spirit.

\begin{remark}
  Let
  \begin{displaymath}
    V= a \, x^2_1 + \tilde{V} \,,
  \end{displaymath}
where $\tilde{V}\in C^1$ satisfies
$\tilde{V}\xrightarrow[|x|\to + \infty]{}0$ and $a > 0$. \\
Then (with $h=1$)
\begin{displaymath}
  \sigma_{ess}(\A_1^\#)= \Union_{
    \begin{subarray}{c}
      r\geq0 \\
      n\in\N
    \end{subarray}}\{ e^{i\pi/4}a^{1/2}(2n-1)+r \}\,.
\end{displaymath}
\end{remark}

The proof is very similar to the proof of Theorem \ref{theorem2.5} and
is therefore skipped. Note that, in the limit $a\to0^+$,
$\sigma_{ess}(\A_1^\#)$ tends to the sector $0\leq\arg z\leq \pi/4$. This is,
once again, not in accordance with the guess that the essential
spectrum tends to $\R_+=\sigma_{ess}(-\Delta)$.

\subsection{Semi-classical analysis of the bottom of the spectrum.} 

We begin by recalling the assumptions made
 in \cite{Hen, AlHen,AGH} (sometimes in
a stronger form) while obtaining a bound on the left margin of the
spectrum of $\A^\#_h$ in a bounded domain.  In contrast, we
consider here a domain which lie in the exterior of a bounded boundary
$\partial\Omega$ in $\R^d$ for $d\geq 2$.\\
First, we assume
\begin{assumption}\label{ass1}
$|\nabla V (x)|$ never vanishes in $\overline {\Omega}$.
\end{assumption}
Note that together with Assumption \ref{ass7} this implies that $V$
satisfies \eqref{ass3} and \eqref{ass4} in $\overline{\Omega}$.\\

Let $\partial\Omega_\perp$ denote the subset of $\partial\Omega$ where
$\nabla V$ is orthogonal to $\partial\Omega\,$:
\begin{equation}\label{defPaPerp}
 \partial\Omega_\perp = \{x\in\partial\Omega^\#: \nabla V(x) = (\nabla V(x) \cdot \vec \nu(x))\,\vec \nu(x) \}\,,
\end{equation}
where $\vec \nu(x)$ denotes the outward normal on $\partial\Omega $ at
$x$\,.\\
We now recall from \cite{AGH} the definition of the one-dimensional complex Airy
operators.  To this end we let $ \mathfrak D^\#$, for $\#\in\{D,N\}$, be defined in the following
manner
\begin{equation}
\label{eq:98}
  \begin{cases}
    \mathfrak D^\#=\{u\in H^2_{loc}(\overline{\R_+}) \,| \, u(0)=0 \} & \#=D \\
    \mathfrak D^\# = \{u\in H^2_{loc}(\overline{\R_+}) \,| \, u^\prime(0)=0 \} & \#=N \,.
  \end{cases}
\end{equation}

Then, we define the operator
\begin{displaymath}
   \LL^\#(\jf ) = -\frac{d^2}{dx^2}+i\, \jf \,x \,,
\end{displaymath}
whose domain is given by
\begin{equation}
\label{eq:101}
  D(\LL^\# (\jf ))= H^2(\R^+)\cap L^2(\R^+;|x|^2dx)\cap \mathfrak D^\#\,,
\end{equation}
and set
\begin{equation} 
\label{eq:75}
  \lambda^\#(\jf ) = \inf \Re\sigma(\LL^\#(\jf )) \,.
\end{equation}
Next, let 
\begin{equation}
\label{eq:97} 
\Lambda_{m}^\#=\inf_{x\in\partial\Omega_\perp}\lambda^\#(|\nabla V(x)|) \,.
\end{equation}
In all cases we denote by $\Sg^\#$ the set 
\begin{equation}\label{eq:2}
  \Sg^\#:=\{x\in\partial\Omega_\perp \,: \,\lambda^\# (|\nabla V(x)|) = \Lambda_{m}^\#\,\}\,.
\end{equation}

When $\#\in\{D,N\}$ it can be verified by a dilation argument that,
when $\jf >0\,$,
\begin{equation}
   \lambda^\#(\jf) =\lambda^\#(1)\, \jf ^{2/3}\,.
\end{equation}
Hence 
\begin{equation}
\Lambda_{m}^\#= \lambda^\# (\jf_m)\,,\mbox{ with } \jf_m:= \inf_{x\in\partial\Omega_\perp}(|\nabla V(x)|)\,,
\end{equation}
and $\Sg^\#$ is actually independent of $\#$:
\begin{equation}\label{eq:2a}
  \Sg^\#=\Sg:=\{x\in\partial\Omega_\perp \,: \,|\nabla V(x)| = \jf_m\,\}\,.
\end{equation}

We next make the following additional assumption:
\begin{assumption}\label{nondeg2}
At each point $x$ of $\Sg^\#$, 
\begin{equation}
\label{eq:105}
 \alpha (x)=\text{\rm det} D^2V_\partial(x) \neq 0\,,
\end{equation}
where $V_\partial$ denotes the restriction of $V$ to $\partial\Omega
\,$, and $D^2V_\partial$ denotes its Hessian matrix.
\end{assumption}
It can be easily verified that \eqref{eq:105} implies that $\mathcal
S^\#$ is finite.  Equivalently we may write
\begin{subequations}
   \label{eq:95}
\begin{equation} 
\alpha(x) =\Pi_{i=1}^{d-1}\alpha_i(x) \neq0\,,
\end{equation}
where $\alpha_1,\ldots,\alpha_{d-1}$ are the eigenvalues of the
Hessian matrix $D^2V_\partial(x)$:
\begin{equation} 
  \{\alpha_i\}_{i=1}^{d-1} = \sigma(D^2V_\partial) \,,
\end{equation}
where each eigenvalue is counted according to its multiplicity. 
\end{subequations}

Our main result is
\begin{theorem}
\label{theorem2.7} 
Under Assumptions \ref{ass7}, \ref{ass1} and \ref{nondeg2}, we have
\begin{equation}\label{limSpect1}  
  \lim\limits_{h\to0}\,\frac{1}{h^{2/3}}\inf \bigl\{\Re\, \sigma(\mathcal A _h^D) \bigr\} =  \Lambda_{m}^{D}\,,
\qquad  \Lambda_{m}^{D} = \frac{|a_1|}{2}\jf_m ^{2/3}\,,
% \varliminf\limits_{h\to0}\frac{1}{h^{2/3}}\inf \bigl\{\Re\, \sigma(\mathcal A _h^D) \bigr\} \geq  \frac{|a_1|}{2}\jf_m ^{2/3}\,,
\end{equation}  
where $a_1<0$ is the rightmost zero of the Airy function $\Ai$.
Moreover, for every $\varepsilon>0\,$, there exist $h_\varepsilon>0$
and $C_\varepsilon>0$ such that
\begin{equation}\label{estRes1}  
 \forall h\in(0,h_\varepsilon),~~~
 \sup_{
\begin{subarray}{c} 
\gamma\leq \Lambda_{m}^{D}  \\[0.5ex]
%\gamma\leq \frac{|a_1|}{2}\, \jf_m^{2/3} \\[0.5ex]
     \nu \in\mathbb{R}
\end{subarray}
}
 \|(\mathcal A _h^D-(\gamma -\varepsilon)h^{2/3}-i\nu)^{-1}\|\leq\frac{C_\varepsilon}{h^{2/3}}\,.
% \forall h\in(0,h_\varepsilon),~~~
% \sup_{\tiny{\begin{array}{c}\gamma\leq \frac{|a_1|}{2}\,\jf_m ^{2/3}\,,\\ \nu\in\mathbb{R}\end{array}}}
% \|(\mathcal A _h^D-(\gamma -\varepsilon)h^{2/3}-i\nu)^{-1}\|\leq\frac{C_\varepsilon}{h^{2/3}}\,.
\end{equation}
\end{theorem}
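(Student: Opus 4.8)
The plan is to follow the strategy of \cite{AGH} for the bounded case, adapting it to the exterior setting by using Theorem~\ref{theorem2.5} (or Theorem~\ref{theorem2.3}) to dispose of any contribution from infinity. The argument splits into the two familiar halves: an upper bound on the left margin (plus the resolvent estimate \eqref{estRes1}), obtained by constructing quasimodes concentrated near the finite set $\Sg$; and a matching lower bound, obtained by localizing the operator and comparing each piece to a model operator.

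First I would address the upper bound. Near a point $x_0\in\Sg$, introduce boundary normal coordinates $(s,t)$ with $t=\dist(x,\partial\Omega)\geq 0$ and $s$ tangential. Since $x_0\in\partial\Omega_\perp$, $\nabla V$ is normal to $\partial\Omega$ there, so to leading order $V$ behaves like a linear function in $t$ with slope $|\nabla V(x_0)|=\jf_m$, plus a quadratic form in $s$ whose determinant is $\alpha(x_0)\neq0$ by Assumption~\ref{nondeg2}. Rescaling $t\sim h^{2/3}$ and $s\sim h^{1/3}$, one finds that the rescaled operator converges to $\LL^D(\jf_m)\otimes \text{(harmonic oscillator in }s)$, exactly as in the bounded case; the Dirichlet model operator $\LL^D(\jf_m)$ has $\inf\Re\sigma=\Lambda_m^D=\tfrac{|a_1|}{2}\jf_m^{2/3}$, where $a_1$ is the rightmost zero of $\Ai$ (this is the computation recalled in \cite{AGH,GHH}). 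Pasting a Gaussian quasimode in $s$ onto the first eigenfunction of $\LL^D(\jf_m)$ in $t$, cut off away from $\partial\Omega_\perp$, yields an approximate eigenpair with eigenvalue $\Lambda_m^D h^{2/3}+o(h^{2/3})$, proving $\limsup \frac{1}{h^{2/3}}\inf\Re\sigma(\A_h^D)\leq \Lambda_m^D$. The resolvent bound \eqref{estRes1} near $\Sg$ follows from the corresponding sectorial resolvent estimate for the model operator $\LL^D$ combined with the quadratic confinement in $s$, uniformly in the imaginary shift $\nu$ since a translation in $\nu$ can be absorbed into a shift of the $t$-variable in the linear model.

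For the lower bound and the global resolvent estimate I would use a partition of unity $\sum \chi_j^2 = 1$ on $\overline\Omega$ with IMS-type localization: one family of cutoffs supported in small balls around the points of $\Sg$, one covering the rest of $\partial\Omega$, one covering the interior of $\Omega$ at fixed distance from the boundary, and one supported near infinity. On the balls around $\Sg$ the argument above gives a lower bound $\geq(\Lambda_m^D-\var)h^{2/3}$ for $\Re$ of the localized form. On the rest of $\partial\Omega$, either $x\notin\partial\Omega_\perp$ (so $\nabla V$ has a nonzero tangential component and a quasi-mode cannot concentrate — the standard one-dimensional \emph{Davies-type} or complex WKB estimate gives $\Re\sigma\gtrsim h^{1/2}\gg h^{2/3}$ there, cf. \cite{Hen,AGH}), or $x\in\partial\Omega_\perp\setminus\Sg$ so $|\nabla V(x)|>\jf_m$ strictly and $\lambda^\#(|\nabla V|)>\Lambda_m^D$, giving a better constant. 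In the interior, where $\nabla V\neq0$ by Assumption~\ref{ass1}, the relevant model is the whole-line operator $-h^2 d^2/dx^2 + i\jf x$, whose spectrum is empty and whose resolvent decays, so those pieces contribute a term $\geq c h^{2/3}$ with $c$ arbitrarily large once $h$ is small (or more precisely, the localized resolvent is $O(h^{-2/3})$ uniformly). For the piece near infinity this is precisely where Theorem~\ref{theorem2.5} (resp. Theorem~\ref{theorem2.3}) enters: under Assumption~\ref{ass5} (resp.~\ref{ass7}), the operator localized to $\{|x|\geq R\}$ has no spectrum in the relevant strip $\{\Re z\leq \Lambda h^{2/3}\}$, with a resolvent bound $O(h^{-2/3})$; combining these with the error terms $O(h^{2/3}\sum|\nabla\chi_j|^2)=O(h^{2/3})$ from IMS and resumming gives $\inf\Re\sigma(\A_h^D)\geq(\Lambda_m^D-\var)h^{2/3}$ and the estimate \eqref{estRes1}.

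The main obstacle is the uniform control of the piece near infinity and the way it glues to the rest. In the bounded-domain analysis of \cite{AGH} there is simply no such region; here one must ensure that the exterior estimate from Theorems~\ref{theorem2.3}–\ref{theorem2.5} holds \emph{uniformly in the spectral parameter} over the whole half-strip $\{\gamma\leq\Lambda_m^D,\ \nu\in\R\}$ and with a resolvent norm that does not beat the $h^{-2/3}$ budget, since otherwise the localization error swamps the main term. Concretely, one has to revisit the proof of Theorem~\ref{theorem2.5} and extract quantitative, $h$-dependent resolvent bounds (the separation-of-variables estimate $\|e^{-t\A}\|\leq Ce^{-t^3/12}$ in the remark is the model computation, but it must be made uniform after the change of scale $x\mapsto h^{2/3}$ and localized via a cutoff whose commutator with $-h^2\Delta$ is controlled). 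The second, more technical, difficulty is handling the points of $\partial\Omega_\perp$ where $\alpha(x)=0$ is \emph{excluded} by Assumption~\ref{nondeg2} but which could in principle be approached by the others; finiteness of $\Sg$ (noted right after Assumption~\ref{nondeg2}) is what makes the partition of unity argument work, so one should record that the curvature term genuinely provides a harmonic-oscillator gap of size $h^{2/3}$ and not merely $o(h^{2/3})$.
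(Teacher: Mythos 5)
Your overall architecture (quasimodes concentrated at $\Sg$ for the upper bound; localization onto model operators for the lower bound and for \eqref{estRes1}) is the same as the paper's, which imports \cite[Sections 6--7]{AGH} almost verbatim. The genuine gap is exactly at the point you flag as the ``main obstacle'': the piece at infinity. Theorems \ref{theorem2.3} and \ref{theorem2.5} as stated are qualitative statements about the essential spectrum and give no uniform $O(h^{-2/3})$ resolvent bound on the half-plane $\{\Re z\le(\Lambda_m^D-\varepsilon)h^{2/3},\ \nu\in\R\}$, which your IMS gluing needs; and the repair you sketch --- extracting a quantitative bound from the semigroup estimate $\|e^{-t\A}\|\le Ce^{-t^3/12}$ via separation of variables --- is only available when $V=\jf x_1+\tilde V$ (Assumption \ref{ass5}), whereas Theorem \ref{theorem2.7} is stated under the weaker Assumption \ref{ass7}, for which no separation of variables exists. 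So as written, the step controlling the exterior region would fail in the stated generality.

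The paper closes this gap not by quoting Theorems \ref{theorem2.3}--\ref{theorem2.5} but by re-using the \emph{proof} of Theorem \ref{theorem2.3}: the whole of $\Omega$ away from the boundary, including the neighborhood of infinity, is covered by infinitely many $h$-dependent balls $B(a_j,h^\varrho)$ with $1/3<\varrho<2/3$ (no fixed-scale cutoff at infinity at all); on each ball $\A_h$ is replaced by its linearization \eqref{eq:6}, whose resolvent obeys the estimate \eqref{unifest} of \cite{Hen} uniformly in $j$ and in $\nu\in\R$, precisely because Assumption \ref{ass7} provides \eqref{ass3}--\eqref{ass4} on all of $\R^d$ (a uniform lower bound on $|\nabla V_0|$ and a Hessian bound controlling the Taylor remainder $O(h^{2\varrho})$ on each ball); the errors in \eqref{eq:8} are then $O(h^{2\varrho-2/3})+O(h^{2-2\varrho-2/3})$, and the resummation over the infinite index set is legitimate because the covering has uniformly finite overlap. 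The boundary balls are treated with the operators $R_{j,h}$ of \cite[Eq.~(6.14)]{AGH} exactly as in the bounded case, and uniformity in the imaginary shift is automatic since \eqref{unifest} holds on the full half-plane $\Re z\le\omega h^{2/3}$. Your fixed-scale decomposition could be made to work only by supplying essentially this covering argument for the exterior piece. (A minor point: your concern about points of $\partial\Omega_\perp$ with vanishing $\alpha$ accumulating at $\Sg$ is not an issue; Assumption \ref{nondeg2} is imposed only on $\Sg$, and on $\partial\Omega_\perp\setminus\Sg$ one has $|\nabla V|>\jf_m$, which is all the lower bound requires.)
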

In its first part, this result is essentially a reformulation of the
result stated by the first author in \cite{Alm}.  Note that the second
part provides, with the aid of the Gearhart-Pr\"uss theorem, an
effective bound (with respect to both $t$ and $h$) of the decay of the
associated semi-group as $t\ar +\infty\,$.  The theorem holds in
particular in the case $V(x)=x_1$ where $\Omega$ is the complementary
of a disc (and hence $S^T$ consists of two points).  Note that $\jf_m
=1$ in this case.

\begin{remark}
A similar result can be proved for the Neumann case where
\eqref{limSpect1} is replaced by
\begin{equation}\label{limSpect1N} 
 \lim\limits_{h\to0}\,\frac{1}{h^{2/3}}\inf \bigl\{\Re\, \sigma(\mathcal A _h^N) \bigr\} =  \Lambda_{m}^{N}\,,
\qquad  \Lambda_{m}^{N} = \frac{|a'_1|}{2}\, \jf_m ^{2/3}\,,
% \varliminf\limits_{h\to0}\frac{1}{h^{2/3}}\inf \bigl\{\Re\, \sigma(\mathcal A _h^N) \bigr\} \geq  \frac{|a'_1|}{2}\, \jf_m ^{2/3}\,,
\end{equation}
where $a'_1<0$ is the rightmost zero of $\Ai^\prime$, and
\eqref{estRes1} is replaced by
\begin{equation}\label{estRes1N}  
 \forall h\in(0,h_\varepsilon),~~~
 \sup_{
\begin{subarray}{c} 
\gamma\leq \Lambda_{m}^{N}  \\[0.5ex]
%\gamma\leq \frac{|a_1'|}{2}\, \jf_m^{2/3} \\[0.5ex]
     \nu \in\mathbb{R}
\end{subarray}
}
 \|(\mathcal A _h^N-(\gamma -\varepsilon)h^{2/3}-i\nu )^{-1}\|\leq\frac{C_\varepsilon}{h^{2/3}}\,.
\end{equation}
One can also treat the Robin case or the transmission case (see
\cite{AGH}).
\end{remark}

In the case of the Dirichlet problem, this theorem was obtained in
\cite[Theorem 1.1]{AlHen} for the interior problem and under the
stronger assumption that, at each point $x$ of $ \Sg^D$, the Hessian
of $ V_\partial:= V_{/\partial \Omega^\#}$ is positive definite if
$\partial_\nu V (x) < 0\,$ or negative definite if $\partial_\nu V (x)
> 0\,$, with $\partial_\nu V:=\nu\cdot \nabla V$.  This was extended
in \cite{AGH} to the interior problem without the sign condition of
the Hessian.  Here we prove this theorem for the exterior problem.

\section{Determination of the essential spectrum}
\label{s2}

\subsection{Weyl's theorem for non self-adjoint operators}
\label{sec:weyl}
For an operator which is closed but not self-adjoint, there are many
possible definitions for the essential spectrum.  We refer the reader
to the discussion in \cite{GuWe} or \cite{Sc} for some particular
examples.  In the present work, we adopt the following definition
\begin{definition}
Let $A$ be a closed operator. We will say that $\lambda\in
\sigma_{ess} (A)$ if one of the following conditions is not satisfied:
\begin{enumerate}
\item The multiplicity $\alpha(A-\lambda)$ of $\lambda$ is finite.
\item The range $R(A-\lambda)$ of $(A-\lambda)$ is closed.
\item The codimension $\beta(A-\lambda)$ of $R(A-\lambda)$ is finite.
\item $\lambda$ is an isolated point of the spectrum.
\end{enumerate}
\end{definition}
For bounded selfadjoint operators $A$ and $B$, Weyl's
theorem states that if\break  $A-B=W$ is a compact operator, then
$\sigma_{ess}(A)=\sigma_{ess}(B)$.\\
Once the requirement for self-adjointness
is dropped, a similar result can be obtained, though not without
difficulties (see \cite{GuWe}).  We thus recall the following
theorem from \cite[corollary 2.2]{Sc} (see also \cite[corollary
11.2.3 ]{Dav}). 
\begin{theorem} 
\label{thm:weyl}
Let  $A$ be a bounded operator and $B=A+W$.  If $W$ is compact, then
\begin{equation*}
  \sigma_{ess} (B)=\sigma_{ess} (A)\,.
\end{equation*}
\end{theorem}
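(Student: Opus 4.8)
The plan is to derive this classical statement from the Fredholm theory of bounded operators, by separating the four conditions defining $\sigma_{ess}$ into a ``Fredholm part'' (conditions 1--3) and an ``isolation part'' (condition 4), and treating each part by a stability-under-compact-perturbation argument. Since $-W$ is compact as well, it suffices to prove one inclusion, say $\sigma_{ess}(B)\subseteq\sigma_{ess}(A)$; the reverse then follows by interchanging the roles of $A$ and $B$.

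First I would record that conditions 1, 2 and 3 hold for $A-\lambda$ precisely when $A-\lambda$ is a Fredholm operator on the underlying Hilbert space. By Atkinson's theorem the class of Fredholm operators is stable under compact perturbations and the index is locally constant, hence preserved there; consequently $A-\lambda$ is Fredholm if and only if $B-\lambda=(A-\lambda)+W$ is Fredholm, and in that case $\mathrm{ind}(B-\lambda)=\mathrm{ind}(A-\lambda)$. Thus the set of $\lambda\in\mathbb C$ for which at least one of conditions 1--3 fails is the same for $A$ and for $B$.

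It then remains to compare condition 4. Assume $\lambda\notin\sigma_{ess}(A)$, so that $A-\lambda$ is Fredholm and $\lambda$ is either an isolated point of $\sigma(A)$ or lies in $\rho(A)$; using the Riesz projection attached to $\lambda$ one checks that necessarily $\mathrm{ind}(A-\lambda)=0$, whence $\mathrm{ind}(B-\lambda)=0$ by the previous paragraph. Pick $r>0$ so small that $A-\mu$ is invertible for $0<|\mu-\lambda|\le r$ and, shrinking $r$, that $B-\mu$ is Fredholm for all $|\mu-\lambda|\le r$. The map $\mu\mapsto B-\mu$ is affine, hence analytic, and Fredholm-valued on the connected disc $\{|\mu-\lambda|<r\}$, so the analytic (meromorphic) Fredholm theorem applies: either $B-\mu$ is invertible nowhere on that disc, or it is invertible everywhere on it except on a set with no accumulation point inside. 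The second alternative forces $\lambda$ to be an isolated point of $\sigma(B)$ (or an element of $\rho(B)$), which together with $\mathrm{ind}(B-\lambda)=0$ gives $\lambda\notin\sigma_{ess}(B)$; so everything reduces to excluding the first (degenerate) alternative.

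\textbf{The main obstacle} is precisely excluding that degenerate branch, in which $B-\mu$ would fail to be invertible on an entire neighbourhood of $\lambda$. This is where one genuinely uses that $\lambda$ is merely an isolated (not an interior) point of $\sigma(A)$: on the punctured disc one factorises $B-\mu=(A-\mu)\bigl(I+(A-\mu)^{-1}W\bigr)$, where $\mu\mapsto(A-\mu)^{-1}W$ is an analytic family of compact operators, and one combines the analytic Fredholm theorem for $I+(A-\mu)^{-1}W$ with the index bookkeeping $\mathrm{ind}(B-\mu)=\mathrm{ind}(A-\mu)=0$ off $\lambda$ to rule out that the non-invertibility set is everything, and to prevent eigenvalues of $B$ from accumulating at $\lambda$. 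The remaining ingredients --- Atkinson's theorem, local constancy of the index, the Riesz projection at an isolated spectral point, and the analytic Fredholm theorem --- are entirely standard; a complete treatment is given in \cite{Sc} and \cite{Dav}.
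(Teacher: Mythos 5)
Your reduction of conditions 1--3 to Fredholm theory is fine: by Atkinson's theorem, $A-\lambda$ is Fredholm if and only if $B-\lambda$ is, with the same index, so the set of $\lambda$ at which one of conditions 1--3 fails is identical for $A$ and $B$. That is in fact all that the sources cited for this theorem establish (\cite[Corollary 2.2]{Sc}, \cite{Dav}), and the paper itself gives no proof beyond the citation. The genuine gap is the step you yourself single out as the main obstacle: excluding the degenerate branch of the analytic Fredholm alternative. This exclusion is not merely left unproven in your sketch --- it is impossible, because ``Fredholm of index $0$'' does not prevent $B-\mu$ from having a nontrivial kernel (and cokernel) at \emph{every} point of a neighbourhood of $\lambda$. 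Concretely, let $A$ be the bilateral shift on $\ell^2(\mathbb{Z})$ and $W=-\langle\,\cdot\,,e_{-1}\rangle e_0$, so that $B=A+W$ is unitarily equivalent to $S\oplus S^*$ with $S$ the unilateral shift. Every $\mu$ with $|\mu|<1$ lies in $\rho(A)$, while $B-\mu$ is Fredholm of index $(-1)+(+1)=0$ with one-dimensional kernel; hence $\sigma(B)$ is the closed unit disc, no interior point is isolated, and with the paper's four-condition definition $\sigma_{ess}(B)$ contains the open disc whereas $\sigma_{ess}(A)$ is contained in the unit circle. In your notation, $I+(A-\mu)^{-1}W$ is non-invertible for every $\mu$ in the punctured disc around $\lambda=0$: the analytic Fredholm theorem lands precisely in the branch you claim to rule out by index bookkeeping.

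So with condition 4 included (a Browder-type essential spectrum) the asserted equality is simply not stable under compact perturbations, and no argument can close your gap; only conditions 1--3 (together with vanishing of the index, if one wishes) transfer, and your first paragraph already proves that completely. What survives of your condition-4 analysis is a weaker, global statement, which is also what the paper actually needs in its applications: if a connected component of $\{\lambda\in\mathbb{C}\,:\ B-\lambda\ \text{is Fredholm of index }0\}$ contains a point of $\rho(B)$ (for instance because $B$ is bounded and the component is unbounded), then the analytic Fredholm/punctured-neighbourhood argument shows that $\sigma(B)$ is discrete in that component, so isolation is recovered there. But that is a connectedness argument about $B$ alone, not a pointwise consequence of $\sigma_{ess}(B)=\sigma_{ess}(A)$ with the isolation condition built into the definition.
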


In the present contribution we obtain the essential spectrum of
$(\mathcal A_h^\# +1)^{-1}$, which is clearly a bounded operator in
view of the accretiveness of $\A_h^\#$. We follow arguments
disseminated in \cite{PuRo} (see also \cite{Sam}), that are
rather standard in the self-adjoint case.  The idea is to compare
two bounded operators in $\mathcal L (L^2(\mathbb R^d))$. The proof is
divided into two steps.

\subsection{The pure Bloch-Torrey case in $\R^d$}
\label{s4}

We consider the case where $V(x)=V_0(x)$ and $V_0$ is given by 
\begin{equation*}
V_0(x):= \jf \,x_1
\end{equation*}
with $\jf \neq 0$ (assuming $h=1$) and will apply the result of
Subsection \ref{sec:weyl}.  The first operator is, in $\mathcal L (L^2(\mathbb
R^d))$
\begin{equation*}
A= (\A_0 + 1)^{-1}\,,
\end{equation*}
where
\begin{displaymath}
  \A_0=-\Delta+i\,V_0\,. 
\end{displaymath}
Because $d\geq 2$,  $A$ is not compact but nevertheless
we have 
\begin{lemma}
\begin{equation}
\label{assertion}
\sigma (A) =
\{0\}\,.
\end{equation}
\end{lemma}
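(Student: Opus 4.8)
The plan is to reduce the computation of $\sigma(A)$ to the (known) emptiness of $\sigma(\A_0)$ on the whole of $\R^d$, together with a routine application of the spectral mapping property of the resolvent.

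First I would record that $\sigma(\A_0)=\emptyset$ when $V_0(x)=\jf x_1$ and $K=\emptyset$; this is precisely the result of \cite{Alm,Hen}. (Alternatively, one may argue by separation of variables exactly as in the Remark above: writing $x=(x_1,x^\prime)$ with $x^\prime\in\R^{d-1}$, the operator splits as $\A_0=(-\partial_{x_1}^2+i\jf x_1)\otimes I+I\otimes(-\Delta_{x^\prime})$, so that $e^{-t\A_0}=e^{-t(-\partial_{x_1}^2+i\jf x_1)}\otimes e^{t\Delta_{x^\prime}}$ and hence $\|e^{-t\A_0}\|\leq\|e^{-t(-\partial_{x_1}^2+i\jf x_1)}\|\leq C\,e^{-c_\jf t^3}$ for some $c_\jf>0$, by the one-dimensional estimate together with a dilation in $x_1$. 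The operator-norm convergent integral $\int_0^{+\infty}e^{t\lambda}\,e^{-t\A_0}\,dt$ then furnishes $(\A_0-\lambda)^{-1}$ for every $\lambda\in\C$.) In particular $-1\in\rho(\A_0)$, so that $A=(\A_0+1)^{-1}$ is a well-defined bounded operator on $L^2(\R^d)$, which is moreover injective.

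Next I would invoke the spectral mapping for the resolvent: since $A$ is the resolvent of $\A_0$ at $-1$, for $\mu\neq-1$ one has $\mu\in\sigma(\A_0)\iff(\mu+1)^{-1}\in\sigma(A)\setminus\{0\}$; as $\sigma(\A_0)=\emptyset$ this yields $\sigma(A)\subseteq\{0\}$. Finally, $0\in\sigma(A)$, for otherwise $A$ would be boundedly invertible with $A^{-1}=\A_0+1$, contradicting the unboundedness of $\A_0$ (equivalently, $R(A)=D(\A_0)$ is a proper subspace of $L^2(\R^d)$). Combining the two inclusions gives $\sigma(A)=\{0\}$.

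There is no genuine obstacle here; the one input that must be imported is $\sigma(\A_0)=\emptyset$ (by citation, or by the tensorization/semigroup argument sketched above), and the hypothesis $d\geq2$ plays no role in the spectrum computation itself — it only makes the accompanying assertion that $A$ is \emph{not} compact correct, since for $d=1$ the operator $A$ is compact while still satisfying $\sigma(A)=\{0\}$.
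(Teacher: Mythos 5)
Your proof is correct, and its first half (the inclusion $\sigma(A)\subseteq\{0\}$ via the emptiness of $\sigma(\A_0)$ from \cite{Alm,Hen} plus the spectral mapping for the resolvent) coincides with the paper's argument. Where you diverge is the proof that $0\in\sigma(A)$: the paper constructs an explicit singular sequence, taking the one-dimensional complex Airy operator $\LL=-\partial^2_{x_1}+i\jf x_1$, whose resolvent $(\LL+1)^{-1}$ is compact, choosing unit vectors $f_k$ with $\phi_k=(\LL+1)^{-1}f_k\to0$, and tensorizing with a fixed $\psi\in C_0^\infty(\R^{d-1})$ to produce $g_k=f_k(x_1)\psi(x^\prime)-\phi_k(x_1)\Delta_{x^\prime}\psi$ satisfying $Ag_k\to0$ while $\|g_k\|\to1$. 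You instead use the soft observation that $A$ cannot be boundedly invertible, since $R(A)=D(\A_0)$ is a proper subspace of $L^2(\R^d)$ (equivalently, $A^{-1}=\A_0+1$ would be bounded, contradicting the unboundedness of $\A_0$); one could even note more simply that the spectrum of a bounded operator on a complex Hilbert space is nonempty, so $\sigma(A)\subseteq\{0\}$ already forces $\sigma(A)=\{0\}$. Your route is shorter and perfectly adequate for the equality claimed in the lemma; the paper's construction yields a bit more, namely an explicit approximate-eigenvector (Weyl-type) sequence at $0$, in the same spirit as the singular-sequence arguments used elsewhere in that section, though this extra information is not needed for the statement itself. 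Your closing remark about $d\geq2$ versus $d=1$ is also accurate: the dimension restriction only concerns the non-compactness of $A$, not the computation of its spectrum.
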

{\bf Proof}\\
To prove that $\sigma (A) \subseteq\{0\}$\, we use the property that
\begin{equation*}
  \lambda \in \sigma (-\Delta + i \,V_0(x) +1) \mbox{ iff } \lambda^{-1} \in \sigma (A)\setminus \{0\}
\end{equation*}
and similarly
\begin{equation*}
  \lambda \in \sigma_{ess} (-\Delta + i \,V_0(x) +1) \mbox{ iff } \lambda^{-1} \in \sigma_{ess} (A)\setminus \{0\}\,.
\end{equation*}
However, it has been established in \cite{Alm,AGH} that the spectrum of $(-\Delta + i\, V_0
+1)$ is empty and hence $\sigma (A) \subseteq\{0\}$. \\
To prove that $0\in\sigma(\A)$ we
consider first the one-dimensional operator
\begin{displaymath}
  \LL=-\partial^2_{x_1}+i\, V_0(x_1) \,,
\end{displaymath}
defined on $D(\LL)=H^2(\R)\cap L^2(\R;x^2dx)$.\\
Since $(\LL+1)^{-1}$ is
compact, it follows that there exists $\{f_k\}_{k=1}^{+\infty} \subset L^2(\R)$ such that
$\|f_k\|_2=1$ and $\phi_k\overset{def}{=}(\LL+1)^{-1}f_k\to0$.\\ Let
$\psi\in C_0^\infty(\R^{d-1})$ satisfy $\|\psi\|_2=1$ and  further
$g_k(x)=f_k(x_1)\psi(x^\prime)- \phi_k(x_1)\Delta_{x^\prime}\psi$.\\ It can be easily
verified that $$Ag_k=\phi_k\psi\to0\,,\mbox{ with } \|g_k\|_2\to1\,.
$$
 Hence,
$0\in\sigma(A)$ and the lemma is proved. 
\qed 

For a given regular set $K$ with non empty interior, 
consider in $L^2(\mathbb R^d)$ (which is identified with $
L^2(\dot{K}) \oplus L^2(\Omega)$ where  $\dot{K}$  is the interior of $K$)    
the operator
\begin{equation*}
B := 0 \oplus (\mathcal A^N_{\Omega,V_0} +1)^{-1}\,.
\end{equation*}
Again we have, 
\begin{equation*}
 \lambda \in \sigma (\mathcal A^N_{\Omega,V_0}+1) \mbox{ iff } \lambda^{-1}  \in \sigma (B) \setminus\{0\}\,,
\end{equation*}
and similarly
\begin{equation*}
 \lambda \in \sigma_{ess} (\mathcal A^N_{\Omega,V_0}+1) \mbox{ iff } \lambda^{-1}  \in \sigma_{ess} (B) \setminus\{0\}\,,
\end{equation*}
Hence it remains to prove:
\begin{proposition}
\label{prop:essential}
\begin{equation*}
\sigma_{ess} (B)  = \sigma_{ess} (A)\,.
\end{equation*}
\end{proposition}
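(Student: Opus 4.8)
The plan is to prove that the operator $W:=A-B$ on $L^2(\R^d)\simeq L^2(\dot K)\oplus L^2(\Omega)$ is compact, after which Theorem~\ref{thm:weyl} gives $\sigma_{ess}(B)=\sigma_{ess}(A)$ directly. Fix $f\in L^2(\R^d)$, write $f_\Omega=f|_\Omega$, and set $u=Af=(\A_0+1)^{-1}f$ and $v=(\A^N_{\Omega,V_0}+1)^{-1}f_\Omega$, so that $Wf=u|_{\dot K}\oplus(u|_\Omega-v)$. From the description of $D(\A_0)$ and of $D(\A^N_{\Omega,V_0})$ recalled above (see \cite{AlHel,GHH,AGH}), together with interior and boundary elliptic regularity, one has $u\in H^2(\R^d)$, $v\in H^2(\Omega)$ with $\|u\|_{H^2(\R^d)}+\|v\|_{H^2(\Omega)}\le C\|f\|_{L^2(\R^d)}$; these a priori bounds will be used throughout.

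First, the component $f\mapsto u|_{\dot K}$ is compact, since it factors as $L^2(\R^d)\to H^2(\dot K)\hookrightarrow L^2(\dot K)$ with the inclusion compact by Rellich ($\dot K$ being bounded with smooth boundary). For the component $u|_\Omega-v$, fix $\phi\in C_c^\infty(\R^d)$ with $\phi\equiv 1$ on a neighbourhood of $\overline K$ (recall $K\subset B(0,R)$) and $\mathrm{supp}\,\phi\subset B(0,R+2)$, and put $\psi:=1-\phi$, so that $\psi$, $\nabla\psi$, $\Delta\psi$ all vanish near $\overline K$; in particular $\mathrm{supp}\,\nabla\psi$ is a compact subset of $\Omega$. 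Split $u|_\Omega-v=\phi(u|_\Omega-v)+\psi(u|_\Omega-v)$. The term $f\mapsto\phi(u|_\Omega-v)$ is compact, because $f\mapsto\phi\,u|_\Omega$ and $f\mapsto\phi\,v$ are bounded from $L^2(\R^d)$ into $H^2$ of the bounded set $\Omega\cap B(0,R+2)$ (whose boundary, $\partial K$ together with a sphere of radius $R+2$, is smooth, the two pieces being disjoint), hence compact into $L^2$ by Rellich.

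It remains to treat $\psi(u|_\Omega-v)$. Since $\psi$ vanishes near $\partial\Omega$, one checks directly that $\psi\,u|_\Omega$ and $\psi\,v$ both lie in $D(\A^N_{\Omega,V_0})$, and, using $(-\Delta+iV_0+1)u=f$ on $\R^d$ and $(\A^N_{\Omega,V_0}+1)v=f_\Omega$, that on $\Omega$
\begin{equation*}
  (\A^N_{\Omega,V_0}+1)\bigl(\psi\,u|_\Omega-\psi\,v\bigr)=[-\Delta,\psi]\,(u|_\Omega-v)\,,
\end{equation*}
the source terms $\psi f$ and $\psi f_\Omega$ coinciding on $\Omega$. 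The commutator $[-\Delta,\psi]=-2\nabla\psi\cdot\nabla-\Delta\psi$ is a first-order operator with coefficients supported in the fixed compact set $\mathrm{supp}\,\nabla\psi\subset\Omega$, so $f\mapsto[-\Delta,\psi](u|_\Omega-v)$ factors through the restriction of $u|_\Omega-v$ to a fixed bounded neighbourhood of $\mathrm{supp}\,\nabla\psi$, which is bounded into $H^2$ and therefore compact into $H^1$ by Rellich; composing with the bounded operator $(\A^N_{\Omega,V_0}+1)^{-1}$ shows that $f\mapsto\psi(u|_\Omega-v)=(\A^N_{\Omega,V_0}+1)^{-1}[-\Delta,\psi](u|_\Omega-v)$ is compact. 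Adding the three compact contributions yields $W=A-B$ compact, and Theorem~\ref{thm:weyl} concludes.

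The main obstacle is that the identity above looks circular, $u|_\Omega-v$ appearing on both sides; the resolution is to notice that its right-hand side only involves $u|_\Omega-v$ on the compact annulus $\mathrm{supp}\,\nabla\psi$ and only in the $H^1$ norm, whereas elliptic regularity controls it there in $H^2$, so Rellich supplies the missing compactness. The remaining technical points are the verification that $\psi\,u|_\Omega$ and $\psi\,v$ belong to $D(\A^N_{\Omega,V_0})$ (so that the resolvent identity is legitimate) and the up-to-the-boundary estimate $\|v\|_{H^2(\Omega)}\le C\|f\|$ for the Neumann realisation with the unbounded potential $V_0=\jf x_1$, for which we invoke the domain characterisation of \cite{AGH,GHH}.
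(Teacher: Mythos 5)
Your argument is correct, and it proves the key fact (compactness of $A-B$, i.e.\ Proposition \ref{prop:compact}) by a genuinely different mechanism than the paper. The paper inserts the intermediate operator $C=(\A^D_{\dot{K},V_0}+1)^{-1}\oplus(\A^N_{\Omega,V_0}+1)^{-1}$, observes that $C-B$ is compact, and proves compactness of $A-C$ by duality: with $u=Af$ and $v=C^*g$ it integrates by parts so that only boundary terms on $\Gamma=\partial K$ survive, bounds them by trace and Sobolev estimates to get $\|(A-C)f\|_2\le C'_{\tilde K}\|Af\|_{H^{3/2}(\dot{\tilde{K}})}$ (estimate \eqref{eq:1}), and concludes via the compact embedding $H^2(\dot{\tilde K})\hookrightarrow H^{3/2}(\dot{\tilde K})$. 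You avoid both the intermediate operator and the adjoint/trace computation: you split $A-B$ with the cutoffs $\phi,\psi$, dispose of the pieces supported in a fixed bounded set directly by Rellich, and treat the far-field piece through the commutator identity $(\A^N_{\Omega,V_0}+1)\bigl(\psi(u|_\Omega-v)\bigr)=[-\Delta,\psi](u|_\Omega-v)$, whose right-hand side is localized on a fixed compact annulus inside $\Omega$ and is controlled there in $H^1$, so Rellich again yields compactness; your remark that the apparent circularity is harmless is right, since the identity simply exhibits $\psi(u|_\Omega-v)$ as a bounded operator composed with a compact operator acting on $f$. Both routes rest on the same analytic inputs — the resolvents map $L^2$ boundedly into $H^2$ (the domain characterizations of \cite{AlHel,GHH,AGH}, which you correctly flag, and which the paper uses in the same implicit way for $u$ and for $v_\pm=C^*g$) plus Rellich on a bounded neighbourhood of $K$ — so neither is logically cheaper; what your version buys is that no adjoints need to be identified (the paper uses $(\A^\#_{\cdot,V_0}+1)^*=\A^\#_{\cdot,-V_0}+1$) and no boundary-trace estimates are needed, and it transfers verbatim to the Dirichlet exterior problem and to the comparison of different boundary conditions in Remark \ref{rem:empty}, since $\psi$ annihilates all boundary contributions; what the paper's trace-form argument (following \cite{PuRo}) buys is the quantitative bound \eqref{eq:1}, which only requires $H^{3/2}$ regularity of $Af$ near $K$ and is the form most convenient for the variant used later in the proof of Theorem \ref{theorem2.5}.
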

By Weyl's theorem, it is enough to prove.
\begin{proposition}
\label{prop:compact}
$A-B$ is a compact operator.
\end{proposition}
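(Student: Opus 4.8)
The plan is to show that the difference $A-B$ of the two resolvents is compact by reducing it, via resolvent identities, to operators that gain enough regularity and decay to be Hilbert--Schmidt, or at least compact, on $L^2(\mathbb R^d)$. Write $A=(\A_0+1)^{-1}$ with $\A_0=-\Delta+iV_0$ on $\mathbb R^d$, and $B=0\oplus(\A^N_{\Omega,V_0}+1)^{-1}$ acting on $L^2(\dot K)\oplus L^2(\Omega)$. First I would fix a smooth cutoff: let $\chi\in C^\infty(\mathbb R^d)$ with $\chi\equiv 1$ on a neighborhood of $K$ (recall $K\subset B(0,R)$) and $\chi\equiv 0$ outside $B(0,2R)$, say, so that $1-\chi$ is supported in $\Omega$ and equals $1$ near infinity. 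The point of the cutoff is that far from $K$ the two operators are governed by the same differential expression, so all the discrepancy is localized in the compact region $\{\,0<\chi<1\,\}\cup \dot K$.

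The key computation is a commutator/resolvent-identity manipulation. For $f\in L^2(\mathbb R^d)$ set $u=Af=(\A_0+1)^{-1}f$. I would like to compare $u$ restricted to $\Omega$ with $Bf|_\Omega=(\A^N_{\Omega,V_0}+1)^{-1}(f|_\Omega)$. Consider $(1-\chi)u$: since $1-\chi$ vanishes near $\partial\Omega$, the function $(1-\chi)u$ lies in the domain of $\A^N_{\Omega,V_0}$ (the Neumann condition is automatic) and
\begin{equation*}
(\A^N_{\Omega,V_0}+1)\big((1-\chi)u\big) = (1-\chi)f - [\Delta,1-\chi]u = (1-\chi)f + [\Delta,\chi]u\,,
\end{equation*}
where $[\Delta,\chi]u = (\Delta\chi)u + 2\nabla\chi\cdot\nabla u$ is supported in the compact shell $\{0<\chi<1\}$. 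Hence
\begin{equation*}
(1-\chi)u = B\big((1-\chi)f\big) + (\A^N_{\Omega,V_0}+1)^{-1}\big([\Delta,\chi]u\big)\,,
\end{equation*}
and since $(1-\chi)f = f - \chi f$ on $\Omega$, rearranging gives on $L^2(\Omega)$
\begin{equation*}
(A-B)f\big|_\Omega = \chi u\big|_\Omega - B(\chi f) - (\A^N_{\Omega,V_0}+1)^{-1}\big([\Delta,\chi]u\big)\,,
\end{equation*}
while on $L^2(\dot K)$ one simply has $(A-B)f|_{\dot K}=u|_{\dot K}=(\chi u)|_{\dot K}$ since $\chi\equiv1$ there. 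So $A-B$ is a sum of three pieces: $f\mapsto \chi u = \chi(\A_0+1)^{-1}f$, which maps into functions supported in $B(0,2R)$; $f\mapsto B(\chi f)$, built from multiplication by the compactly supported $\chi$ followed by a bounded operator; and $f\mapsto (\A^N_{\Omega,V_0}+1)^{-1}([\Delta,\chi]u)$, where $[\Delta,\chi]$ is a first-order operator with compactly supported coefficients. Each of these factors through multiplication by a compactly supported function composed with a resolvent of an elliptic operator.

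To conclude compactness I would use that $(\A_0+1)^{-1}$ and $(\A^N_{\Omega,V_0}+1)^{-1}$ map $L^2$ into $H^2_{loc}$ with quantitative local estimates (elliptic regularity for these second-order operators, using \eqref{ass0} to control $V_0$), so that $\varphi(\A_0+1)^{-1}$ is bounded from $L^2$ into $H^2$ with support in a fixed compact set whenever $\varphi$ is compactly supported; by Rellich--Kondrachov such an operator is compact on $L^2$. The same applies with $\varphi$ a first-order operator with compactly supported coefficients, landing in $H^1_{loc}$, still compactly embedded. Composing a compact operator with bounded ones on either side preserves compactness, so each of the three pieces is compact, and therefore $A-B$ is compact. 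The main obstacle I anticipate is the bookkeeping near $\partial\Omega$: one must make sure the cutoff is chosen so that $(1-\chi)u$ genuinely lies in $D(\A^N_{\Omega,V_0})$ — in particular that no boundary term is created when applying $\A^N_{\Omega,V_0}+1$ — and that the unbounded potential $V_0$ causes no trouble in these manipulations, which is where the domain characterization from \cite{AlHel,GHH,AGH} and condition \eqref{ass0} are invoked; away from that, the argument is the standard geometric-perturbation/Rellich scheme.
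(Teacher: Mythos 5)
Your route is genuinely different from the paper's. The paper inserts an intermediate operator $C=(\mathcal A^D_{\dot K,V_0}+1)^{-1}\oplus(\mathcal A^N_{\Omega,V_0}+1)^{-1}$, notes that $C-B$ is compact (the Dirichlet resolvent in the bounded set $\dot K$ is compact), and then proves compactness of $A-C$ in weak form: pairing $(A-C)f$ against $g$, using $v=C^*g$ and the identification of the adjoint resolvents as the resolvents with $-V_0$, integration by parts reduces everything to a boundary integral on $\Gamma=\partial K$, which is estimated by $\|Af\|_{H^{3/2}}$ on a fixed compact neighbourhood of $K$; compactness then follows from Rellich. Your cutoff/commutator identity instead decomposes $A-B$ explicitly into three localized pieces; the verification that $(1-\chi)u\in D(\mathcal A^N_{\Omega,V_0})$ is indeed immediate since $1-\chi$ vanishes near $\partial\Omega$ and $u\in H^2$, $V_0u\in L^2$, and the first and third pieces are compact exactly as you say (a cutoff, respectively a first-order commutator with compactly supported coefficients, applied \emph{after} a resolvent bounded from $L^2$ into $H^2_{loc}$ with uniform local bounds, then Rellich, then composition with bounded operators). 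There is also a harmless sign slip: rearranging your identity gives $(A-B)f|_\Omega=\chi u|_\Omega-B(\chi f)+(\mathcal A^N_{\Omega,V_0}+1)^{-1}\bigl([\Delta,\chi]u\bigr)$, with a plus sign on the last term; this does not affect anything.

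The one genuine gap is the middle piece $f\mapsto B(\chi f)=B\,M_\chi f$. Here the compactly supported multiplication acts \emph{before} the resolvent, so the scheme you invoke does not apply: $M_\chi$ is not compact on $L^2$, and $(\mathcal A^N_{\Omega,V_0}+1)^{-1}$ maps $L^2(\Omega)$ into $H^2(\Omega)$ of the \emph{unbounded} domain $\Omega$, where there is no compact embedding; hence "compact composed with bounded" covers the first and third pieces but not this one, since neither factor of $B\,M_\chi$ is compact by itself. The standard repair is by duality: $(B\,M_\chi)^*=M_\chi\,B^*$ with $B^*=0\oplus(\mathcal A^N_{\Omega,-V_0}+1)^{-1}$ (the adjoint of the Neumann resolvent is the Neumann resolvent with $V_0$ replaced by $-V_0$, a fact the paper's own proof uses), and $M_\chi B^*$ maps $L^2$ boundedly into $H^2$ of the bounded set $\Omega\cap B(0,2R)$, hence is compact by Rellich; since an operator is compact iff its adjoint is, $B\,M_\chi$ is compact. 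With this point added (or with the comparison carried out in weak form against $C^*g$, which is essentially what the paper does), your argument is complete and yields the proposition.
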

\begin{proof}
 We follow the proof in \cite[p. 578-579 ]{PuRo} (with suitable
changes due to the non self-adjointness of $A$ and $B$).  To this 
end, we introduce the intermediate operator
\begin{equation}
C:=   (\mathcal A^D_{\dot{K},V_0} +1)^{-1} \oplus (\mathcal A^N_{\Omega,V_0} +1)^{-1} \,,
\end{equation}
where $\mathcal A^D_{\dot{K},V_0} $ is the Dirichlet realization of
$(-\Delta + i V (x))$ in $\dot{K}$.\\
It is clear that $ C-B$ is compact, hence it is now enough to obtain
the compactness of the operator $C-A$.\\
  For $f, g\in L^2(\mathbb
R^d)$, let
\begin{equation*}
u = A f\,,\quad  v= C ^* g\,.
\end{equation*}
We then define
\begin{equation*}
u_+ = u_{/\Omega}\,,\quad u_- = u_{/\dot {K}}\,,\quad v_+=v_{/\Omega}\,,\quad v_- =v_{/\dot {K}}.
\end{equation*}
Note that
\begin{equation*}
v_- = ((\mathcal A^D_{\dot{K},V_0} +1)^*)^{-1} g_-= (\mathcal A^D_{\dot{K},-V_0} +1)^{-1} g_-\,,
\end{equation*}
and
\begin{equation*}
v_+ = ((\mathcal A^N_{\Omega,V_0} +1)^*)^{-1} g_+= (\mathcal A^N_{\Omega,-V_0} +1)^{-1} g_+\,.
\end{equation*}

We now write
\begin{equation*}
\begin{array}{ll}
\langle (A-C)f,g\rangle &  = \langle u ,  \left( (\mathcal A^D_{\dot{K},V_0} +1)  \oplus (\mathcal A^N_{\Omega,V_0} +1)\right)^* v 
\rangle - \langle  (\mathcal A_0  +1)  u\,,\, v \rangle\\
 &= \langle u_+, (-\Delta) v_+\rangle_{L^2(\Omega)} + \langle u_-, (-\Delta) v_-\rangle_{L^2(\dot{K})} \\
  & \quad \quad - \langle (-\Delta) u_+, v_+\rangle_{L^2( \Omega)} - \langle
(-\Delta )u_-, v_-\rangle_{L^2(\dot{K})}\,.
\end{array}
\end{equation*}
As $v_-$ satisfies a Dirichlet condition on
$\Gamma= \partial K=\partial \Omega$ and $v_+$ satisfies a Neumann
condition we obtain via integration by parts 
\begin{equation}
\langle (A-C) f,g\rangle = \int_\Gamma \left( u_- \,  \overline{\partial_\nu v_-} -\partial _\nu u_+ \,  \overline{ v_+}\right)  \, ds\,.
\end{equation}
To complete the proof we notice that by Sobolev embedding and the
boundedness of the trace operators  we have for
some compact $\tilde K$ such that $K \subset \dot{\tilde{K}}$ and some constants $C_{\tilde K}$, $C'_{\tilde K}$
\begin{displaymath}
\begin{array}{ll}
  |\langle (A-C) f,g\rangle|&  \leq C_{\tilde K} \left(\|u_+\|_{H^{3/2}(\dot{\tilde{K}}\setminus K)} + \|u_-\|_{H^{3/2}( \dot{K})} \right)\, \left(\|
  v_+\|_{H^2(\dot{\tilde{K}}\setminus K)} +\|v_-\|_{H^2(\dot{K})}\right)\\
   &  \leq C'_{\tilde K}  \, \|u\|_{H^{3/2}(\dot{\tilde{K}})}  \, \| g\|_2 \,.
   \end{array}
\end{displaymath}
Hence, 
\begin{equation}
\label{eq:1} 
  \|(A-C)f\|_2 \leq C'_{\tilde K} \|Af\|_{H^{3/2}(\dot{\tilde{K}})}\,.
\end{equation}
Let $\{ f_k\}_{k=1}^\infty\subset L^2(\R^d)$ satisfy $\|f_k\|\leq1$ for all
$k\in\N$. By the boundedness of $A$ in $\mathcal L (L^2(\mathbb R^d),
H^2(\mathbb R^d))$ the sequence $ \|Af_k\|$ is bounded in
$H^2(\dot{\tilde K})$. By Rellich's theorem, the injection of
$H^2(\dot{\tilde K})$ in $H^\frac 32(\dot{\tilde K})$ is compact.
Hence there exists a subsequence $\{ f_{k_m}\}_{m=1}^\infty$ such that
$\{Af_{k_m}\}_{m=1}^\infty$ is a Cauchy sequence in
$H^{3/2}(\dot{\tilde{K}})$. By \eqref{eq:1}
$\{(A-C)f_{k_m}\}_{m=1}^\infty$ is a Cauchy sequence in $L^2(\R^d)$ and
hence convergent. This completes the proof of Proposition
\ref{prop:compact} and hence also of Proposition \ref{prop:essential}.
\end{proof}

\begin{proof}[Proof of Theorem \ref{theorem2.5}]
  To prove Theorem \ref{theorem2.5} under Assumption \ref{ass5} for
  the case $\tilde{V}\not\equiv0$ we write, for some $\lambda\in\C$ with $\Re\lambda<0$
  \begin{displaymath}
    (-\Delta+i(V_0+\tilde{V})-\lambda)^{-1}=  (-\Delta+iV_0-\lambda)^{-1}[1-i\tilde{V} (-\Delta+i(V_0+\tilde{V})-\lambda)^{-1}]\,.
  \end{displaymath}
Since both  $(-\Delta+i(V_0+\tilde{V})-\lambda)^{-1}:L^2(\Omega)\to H^2(\Omega)$ and
$(-\Delta+iV_0-\lambda)^{-1}:L^2(\Omega)\to H^2(\Omega)$ are bounded, and since
$\tilde{V}:H^2(\Omega)\to L^2(\Omega)$ is compact (as a matter of fact
$\tilde{V}:H^1(\Omega)\to L^2(\Omega)$ is compact as well), it follows by Theorem~\ref{thm:weyl} that the essential spectrum of
$(-\Delta+i(V_0+\tilde{V})-\lambda)^{-1}$ is an empty set. This completes the
proof of Theorem \ref{theorem2.5} for the case $\#=N$.

To prove Theorem \ref{theorem2.5} for the case $\#=D$ we may follow
the same procedure as in Proposition \ref{prop:compact} to obtain a
slightly different compact trace operator, or apply the
following simple argument: Let $R$ be sufficiently large so that
$K\subset B(0,R)$. Let $\A^D_{B(0,R)\setminus K} $ denote the Dirichlet realization
of $\A$ in $B(0,R)\setminus K$. By Proposition~\ref{prop:compact}, the operators $(\mathcal
A^N_{\R^d\setminus B(0,R)} +1)^{-1}\oplus (A^D_{B(0,R)\setminus K} +1)^{-1}$ and
$(A^D_{\Omega} +1)^{-1}$, both in $\mathcal L (L^2(\Omega))$, differ by a
compact operator. Hence, as $$\sigma_{ess}\big((\mathcal
A^N_{\R^d\setminus B(0,R)} +1)^{-1}\oplus (A^D_{B(0,R)\setminus K} +1)^{-1}\big)=\emptyset\,,$$ we
obtain that $\sigma_{ess}\big((A^D_{\Omega} +1)^{-1}\big)=\emptyset$ as well. 
\end{proof}

\begin{remark}
\label{rem:empty}
  An essentially identical proof permits the comparison of the
  essential spectrum of the two exterior problems $(-\Delta +i V_1)^\sharp$ in
  $\Omega_1= \mathbb R^d\setminus K_1$ (with $\#\in\{D,N\}$) and $(-\Delta +i V_2)^\flat$
  in $\Omega_2=\mathbb R^d\setminus K_2$ (with $\flat \in \{D,N\}$) under the
  condition that $V_1=V_2$ outside a large open ball containing $K_1$
  and $K_2$.
  \end{remark}

\begin{proof}[Proof of Theorem \ref{theorem2.3}]
  Since the proof relies on semi-classical analysis, we reintroduce
  the parameter $h$ (we no longer assume $h=1$).  Under Assumption
  \ref{ass7} there exists $R>0$ such that $K\subset B(0,R)$ and a potential
  $V_0$ satisfying Assumptions \eqref{ass3} and \eqref{ass4} in $\R^d$
  and such that $V\equiv V_0$ in $\R^d\setminus B(0,R)$.  By Remark
  \ref{rem:empty} we need only consider the case when $K=\emptyset$,
  with $V$ satisfying \eqref{ass3} and \eqref{ass4} in $\R^d$ .
  
We use the same framework as in \cite{Hen,AGH}. We cover $\mathbb R^d$
by balls $B(a_j,h^\rho)$ of size $h^\rho$ ($\frac 13 < \rho <\frac 23$) and
consider an associated partition of unity $\chi_{j,h}$ such that
\begin{itemize}
\item 
$ \sum_{j\in \mathcal J_i(h)}\chi_{j,h}(x)^2 =1\,,$
\item ${\rm supp\,} \chi_{j,h} \subset B(
a_j(h), h^\rho)\,,$
\item For $|\alpha|\leq 2$,
$
\sum_j |\pa^\alpha\chi_{j,h}(x)|^2\leq C_\alpha \, h^{- 2 |\alpha|{\varrho}} \,.
$
\end{itemize}
$\Lambda$ being given, we construct the approximate resolvent
$(\mathcal A_h -z)$ (with $\Re z \leq \Lambda h^{\frac 23}$) by
\begin{equation*}
\mathcal R_h:= \sum_{j\in \mathcal J} \chi_{j,h} (\mathcal A_{j,h} -z)^{-1}\chi_{j,h}\,.
\end{equation*}
We then use the uniform estimate \cite{Hen}:
\begin{equation}\label{unifest}
\sup_{\Re z \leq \omega h^\frac 23} \| (\mathcal A_{j,h} -z)^{-1}\| \leq C_\omega
[\jf h]^{-\frac 23}\,,
\end{equation} 
where $\jf=|\nabla V(a_j)|$,  $C_\omega$ is independent of $j$, $h\in (0,h_0]$ and
\begin{equation}
\label{eq:6}
\mathcal A_{j,h} := -h^2 \Delta + i \, V_0(a_j) + i \,\nabla V_0(a_j)\cdot (x-a_j)
\end{equation}
is the linear approximation of $\mathcal A_h$ at the point $a_j\,$.\\
As in \cite{Hen,AGH}, we then get 
\begin{equation}
\label{eq:4}
\mathcal R_h \circ (\mathcal A_h -z) = I + \mathcal E (h)\,,
\end{equation}
where
\begin{displaymath}
  \mathcal E(h)  = \sum_{j\in\Jg}\chi_{j,h}\, i\,\big(V_0-V_0(a_j)
  -\nabla V_0(a_j)\cdot (x-a_j)\big)(\mathcal A_{j,h} -z)^{-1}\chi_{j,h}- h^2[\Delta,\chi_{j,h}] (\mathcal A_{j,h} -z)^{-1}\chi_{j,h}\,.
\end{displaymath}
The estimation of the second term in the sum can be done in precisely
the same manner as in \cite{Hen}. For the first term we have by
\eqref{ass3}
\begin{displaymath}
  \big\|\chi_{j,h} \big(V_0-V_0(a_j)
  -\nabla V_0(a_j)\cdot (x-a_j)\big)(\mathcal A_{j,h}
  -z)^{-1}\chi_{j,h}\big\|\leq C_\omega h^{2\rho-2/3}\,.
\end{displaymath}
By the above and \cite{Hen}
\begin{equation}
\label{eq:3}
\| \mathcal E(h) \|_{\mathcal L (L^2(\mathbb R^d))} = \mathcal O
(h^{2-2\rho -\frac 23}) + \mathcal O (h^{2\rho -\frac 23})\,. 
\end{equation}
To obtain \eqref{eq:3}, use has been made of \eqref{ass3},
\eqref{ass4} (of Assumption \ref{ass1}) which permit the use of
\eqref{unifest}. The bound on $|D^2V_0|/|\nabla V_0|^{2/3}$ is necessary in
order to estimate the error in the linear approximation of $V$ in the
ball $B(a_j, h^\rho)$.  Note that the cardinality of $\mathcal J_{i}
(h)$ is now infinite, but it has been established in \cite{AGH} that
the cardinality of the balls $B(a_k, 2h^\rho)$ intersecting a given
$B(a_j, h^\rho)$ is uniformly bounded in
$j$, $h$.\\
By \eqref{eq:3} $I+\Eg(h)$ is invertible for sufficiently small
$h$. Hence, by \eqref{eq:4} we have that
\begin{displaymath}
  \sup_{\Re z \leq \Lambda h^{\frac 23}}\|(\mathcal A_h -z)^{-1}\|\leq C\sup_{\Re
    z \leq \Lambda h^{\frac 23}}\|\mathcal R_h \|\leq \frac{C_\Lambda}{h^{2/3}}\,,
\end{displaymath}
where $c_0$ is the lower bound on $|\nabla V_0|$ given in \eqref{ass4}. 
We may now conclude that for any $\Lambda$, the spectrum (including the
essential spectrum) of $\mathcal A_h =
-h^2 \Delta + i  V$ in $\mathbb R^d$ is contained in $\{ z\in \mathbb C
\,|\, \Re z \geq \Lambda [c_0h]^\frac 23\}$ for $h$ small enough. 
\end{proof}

\section{The left margin of the spectrum}

This section is devoted to the proof of Theorem \ref{theorem2.7}. As
the proof is very similar to the proof in a bounded domain \cite{AGH},
and therefore we bring only its main ingredients.

\subsection{Lower bound}

By lower bound, we mean
\begin{equation}
  \label{eq:5}
 \varliminf\limits_{h\to0}\frac{1}{h^{2/3}}\inf \bigl\{\Re\, \sigma(\mathcal
 A _h^D) \bigr\} \geq   \Lambda_{m}^{\#}\,,
\end{equation}
where $ \Lambda_{m}^{D}$ is given in \eqref{limSpect1} and $ \Lambda_{m}^{N}$ in
\eqref{limSpect1N}. 
.

We keep the notation of \cite[Section~6 ]{AGH}.  For some $1/3<\varrho<2/3$ and for every $h\in(0,h_0]$,
we choose two sets of indices $\Jg_{i}(h)\,$, $ \Jg_{\partial}(h) \,$,
and a set of points
\begin{subequations}   \label{eq:25}
\begin{equation}
\big\{a_j(h)\in \Omega : j\in \mathcal J_i(h)\big\}\cup\big\{b_k(h)\in\pa {\Omega} : k\in \mathcal J_\partial (h)\big\}\,,
\end{equation}
such that $B(a_j(h),h^\varrho)\subset\Omega$\,,
\begin{equation}
    \bar\Omega \subset\bigcup_{j\in \Jg_{i}(h)}B(a_j(h),h^{\varrho})~\cup\bigcup_{k\in \Jg_{\partial }(h)}B(b_k(h),h^{\varrho})\,,
\end{equation}
and such that the closed balls $\bar B(a_j(h),h^{\varrho}/2)\,$, $\bar
B(b_k(h),h^{\varrho}/2)$ are all disjoint. \\

Now we construct in $\mathbb R^d$ two families of functions 
\begin{equation}
 (\chi_{j,h})_{j\in \mathcal J_i(h)} \mbox{ and } (\zeta_{j,h})_{j\in \mathcal J_\partial (h)}\,,
\end{equation}
and a function $\chi_{R,h}$ such that, for every $x\in\bar\Omega\,$,
\begin{equation}
\sum_{j\in \mathcal J_i(h)}\chi_{j,h}(x)^2+\sum_{k\in \mathcal J_\partial (h)}\zeta_{k,h}(x)^2=1\,,
\end{equation}
\end{subequations}
and such that 
\begin{itemize}
\item $\Supp \chi_{j,h}\subset B(a_j(h),h^{\varrho})$ for
$j\in \mathcal J_i(h)$,
\item
 $\Supp\zeta_{j,h}\subset
B(b_j(h),h^{\varrho})$ for $j\in \Jg_\partial\,$,
\item
$\chi_{j,h}\equiv 1$ (respectively $\zeta_{j,h}\equiv1$) on $\bar
B(a_j(h),h^{\varrho}/2)$ (respectively $\bar
B(b_j(h),h^{\varrho}/2)$)\,.
\end{itemize}

To verify that the approximate resolvent constructed in the sequel
satisfies the boundary conditions on $\partial\Omega $, we require in
addition that
\begin{equation}
 \label{eq:90}
\frac{\partial\zeta_{k,h}}{\partial\nu}\Big|_{\partial\Omega}=0 
\end{equation}
for $\#=N$.\\
Note that, for all $\alpha\in\mathbb{N}^n\,$, we can assume that there
exist positive $h_0$ and $C_\alpha$, such that, $\forall h \in
(0,h_0]$, $ \forall x\in \overline{\Omega}$,
\begin{equation}
\label{supPaCutoff} 
|\pa^\alpha \chi_{R,h}|^2+ \sum_j |\pa^\alpha\chi_{j,h}(x)|^2\leq C_\alpha \, h^{- 2 |\alpha|{\varrho}} ~~~\textrm{ and }~~~
\sum_j |\pa^\alpha\zeta_{j,h}(x) |^2 \leq C_\alpha  \, h^{-2 |\alpha|{\varrho}}\,.
\end{equation}

We now define the approximate resolvent as in \cite{AGH}
\begin{equation}
\label{eq:9}
  \mathcal R_h=\sum_{j\in \mathcal \Jg_i(h)}\chi_{j,h}(\A_{j,h}-\lambda)^{-1}\chi_{j,h} 
+  \sum_{j\in \mathcal \Jg_\partial (h)}\eta_{j,h}  R_{j,h}\eta_{j,h}\,,
\end{equation}
where $R_{j,h}$ is given by \cite[Eq. (6.14)]{AGH}, and
$\eta_{j,h}={\mathbf 1}_\Omega\zeta_{j,h}$.  As in \eqref{eq:4} we write 
\begin{equation}
\label{eq:7}
\mathcal R_h \circ (\mathcal A_h -z) = I + \mathcal E (h)\,,
\end{equation}
where
\begin{equation}
\label{eq:8}
\begin{array}{ll}
  \mathcal E(h)  & = \sum_{j\in\Jg}\chi_{j,h}(\A_h-\A_{j,h})(\mathcal A_{j,h} -z)^{-1}\chi_{j,h}\\ & \quad -
  h^2[\Delta,\chi_{j,h}] (\mathcal A_{j,h} -z)^{-1}\chi_{j,h} \\ & \quad  +\sum_{j\in
    \mathcal \Jg_\partial (h)}(\A_h-z)\eta_{j,h}  R_{j,h}\eta_{j,h}\,. 
    \end{array}
\end{equation}
The estimate of the first sum can be now made in  the same
manner as in the proof of Theorem \ref{theorem2.3}, whereas 
control of the second sum can be achieved as in \cite{AGH}. We may
thus conclude that for any $\epsilon>0$ there exists $C_\epsilon>0$ such that for
sufficiently small $h$
\begin{displaymath}
  \sup_{\Re z\leq h^{2/3}(\Lambda_m^\#-\epsilon)}\| \mathcal E(h) \|\leq C\, (h^{2-2\rho -\frac 23} + h^{2\rho -\frac 23})\,. 
\end{displaymath}
Since for sufficiently small $h$ $I+\Eg$ becomes invertible, we can
now use \eqref{eq:7} to conclude that  for any $\epsilon>0$ there exists $C_\epsilon>0$ such that for
sufficiently small $h$
\begin{displaymath}
   \sup_{\Re z\leq h^{2/3}(\Lambda_m^\#-\epsilon)}\|(\A_h-\lambda)^{-1}\|\leq
   \frac{C_\epsilon}{h^{2/3}} \,.
\end{displaymath}
This completes the proof of \eqref{eq:5}.  

\subsection{The proof of upper bounds}

To prove that 
\begin{equation*}
  \varlimsup\limits_{h\to0}\frac{1}{h^{2/3}}\inf \bigl\{\Re\, \sigma(\mathcal
 A _h^\#) \bigr\} \leq    \Lambda_{m}^{\#}\,,
\end{equation*}
we use the same procedure presented in \cite[Section 7]{AGH}. The only
thing we care to mention is that to estimate the contribution of the
interior of $\Omega$ (i.e. the first sum in \eqref{eq:9} and \eqref{eq:8})
we use the same approach as in the proof of Theorem \ref{theorem2.3}. The
rest of the proof, being precisely the same as in  \cite[Section
7]{AGH} is skipped.

\section{Numerical illustration}

In this section, we provide a numerical evidence for the existence of
a discrete spectrum of the Bloch-Torrey operator $\mathcal A_h^N =
-h^2 \Delta + ix_1$ in the case of the exterior of the unit disk:
$\Omega_\infty = \{ x\in\R^2~:~|x| > 1 \}$.  In contrast to the
remaining part of this note, this section relies on numerics and does
not pretend for a mathematical rigor: it only serves for illustration
purposes.

Since a numerical construction of the operator $\mathcal A_h^N$ is not
easily accessible for an unbounded domain, we consider the operator
$\mathcal A_{h,R}^N = - h^2 \Delta + ix_1$ in a circular annulus
$\Omega_R = \{ x\in\R^2~:~1 < |x|< R\}$ with two radii $1$ and $R$.
As $R\to + \infty\,$, the bounded domain $\Omega_R$ approaches to the
exterior of the disk $\Omega_\infty$.  We set Neumann boundary
condition at the inner circle and the Dirichlet boundary condition at
the outer circle.  Given that $\Omega_R$ is a bounded domain, the
operator $\mathcal A_{h,R}^N$ has a discrete spectrum (as $ix_1$ is a
bounded perturbation of the Laplace operator).  The operator $\mathcal
A_{h,R}$ can be represented via projections onto the Laplacian
eigenbasis by an infinite-dimensional matrix $-h^2 \Lambda +
i{\mathcal B}$, where the diagonal matrix $\Lambda$ is formed by
Laplacian eigenvalues and the elements of the matrix ${\mathcal B}$
are the projections of $x_1$ onto two Laplacian eigenfunctions (see
\cite{Grebenkov07,Grebenkov08,Grebenkov10,GH} for details).  In
practice, the matrix $-h^2 \Lambda + i{\mathcal B}$ is truncated and
then numerically diagonalized, yielding a well-controlled
approximation of eigenvalues of the operator $\mathcal A_{h,R}^N$, for
fixed $h$ and $R$.  For convenience, the eigenvalues are ordered
according their increasing real part.

As shown in \cite{GH}, for small enough $h$, the quasimodes of the
operator $\mathcal A_{h,R}^N$ are localized near the boundary of the
annulus, i.e., near two circles.  The quasimodes that are localized
near the inner circle are almost independent of the location of the
outer circle. Since the spectrum of the operator
$\mathcal A_h^N$ in the limiting (unbounded) domain $\Omega_\infty$ is
discrete, some eigenvalues of $\mathcal A_{h,R}^N$ are expected to
converge to that of $\mathcal A_h^N$ as $R$ increases.

Table \ref{tab:eigenvalues} shows several eigenvalues of the operator
$\mathcal A_{h,R}^N$ as the outer radius $R$ grows.  The symmetry of
the domain implies that if $\lambda$ is an eigenvalue, then the complex
conjugate $\bar{\lambda}$ is also an eigenvalue.  For this reason, we only
present the eigenvalues with odd indices with positive imaginary part.
One can see that the eigenvalues $\lambda_1$, $\lambda_3$ and $\lambda_7$ are
almost independent of $R$.  These eigenvalues correspond to the
eigenmodes localized near the inner circle.  We interpret this
behavior as the convergence of the eigenvalues to that of the operator
$\mathcal A_h^N$ for the limiting (unbounded) domain $\Omega_\infty$.  In
contrast, the imaginary part of the eigenvalues $\lambda_5$ and $\lambda_9$ grows
almost linearly with $R$, as expected from the asymptotic behavior
reported in \cite{GH}.  These eigenvalues correspond to the eigenmodes
localized near the outer circle and thus diverge as the outer circle
tends to infinity ($R \to +\infty$).  These numerical results illustrate the
expected behavior of the spectrum.  To illustrate the quality of the
numerical computation, we also present in Table \ref{tab:eigenvalues}
the approximate eigenvalues based on their asymptotics derived in
\cite{GH}:
\begin{equation}  \label{eq:lambda_app}
\begin{split}
\lambda^{N(n,k)}_{\rm app} & = i + h^{2/3} |a_n'| e^{\pi i/3} + h(2k-1) \frac{e^{-\pi i/4}}{\sqrt{2}} + h^{4/3} \frac{e^{\pi i/6}}{2|a'_n|} + O(h^{5/3}) , \\
\lambda^{D(n,k)}_{\rm app} & = i R + h^{2/3} |a_n| e^{-\pi i/3} + h(2k-1) \frac{e^{-\pi i/4}}{\sqrt{2R}} +  O(h^{5/3}) ,\\
\end{split}
\end{equation}
where $a_n$ and $a'_n$ are the zeros of the Airy function and its
derivative, respectively.  Note that
$\lambda^{N(n,k)}_{\rm app}$ corresponds to the inner circle of radius
$1$ where Neumann boundary condition is prescribed, whereas
$\lambda^{D(n,k)}_{\rm app}$ corresponds to the outer circle of radius
$R$ where we impose a Dirichlet boundary condition.  These approximate
eigenvalues (truncated at $O(h^{5/3})$) show an excellent agreement
with the numerically computed eigenvalues of the operator
$\A^N_{h,R}$.  This agreement confirms the accuracy of both the
numerical procedure and the asymptotic formulas
(\ref{eq:lambda_app}). 

%Some other numerical illustrations were provided in \cite{GH}.

\begin{table}
\begin{center}
\begin{tabular}{|c|c|c|c|c|}  \hline
$\lambda_n~\backslash~ R$  &    1.5     &      2       &       3          \\  \hline
%\multirow{2}{3mm}{1} 
$\lambda_1$ & $0.0250 + 1.0318i$ & $0.0250 + 1.0317i$ & $0.0251 + 1.0315i$ \\ %    N
\rowcolor[gray]{0.9}
$\lambda^{N(1,1)}_{\rm app}$ & $0.0251 + 1.0317i$ & $0.0251 + 1.0317i$ & $0.0251 + 1.0317i$ \\  \hline
%
%\multirow{2}{3mm}{3} 
$\lambda_3$ & $0.0409 + 1.0160i$ & $0.0409 + 1.0160i$ & $0.0410 + 1.0158i$ \\ %    N 
\rowcolor[gray]{0.9}
$\lambda^{N(1,3)}_{\rm app}$ & $0.0411 + 1.0157i$ & $0.0411 + 1.0157i$ & $0.0411 + 1.0157i$  \\ \hline 
%
%\multirow{2}{3mm}{5} 
$\lambda_5$ & $0.0501 + 1.4157i$ & $0.0497 + 1.9162i$ & $0.0498 + 2.9161i$ \\ %    D
\rowcolor[gray]{0.9}
$\lambda^{D(1,1)}_{\rm app}$ & $0.0500 + 1.4157i$ & $0.0496 + 1.9162i$ & $0.0491 + 2.9167i$  \\ \hline 
%
%\multirow{2}{3mm}{7} 
$\lambda_7$ & $0.0567 + 1.0003i$ & $0.0567 + 1.0003i$ & $0.0560 + 1.0000i$  \\ %    N
\rowcolor[gray]{0.9}
$\lambda^{N(1,5)}_{\rm app}$ & $0.0571 + 0.9997i$ & $0.0571 + 0.9997i$ & $0.0571 + 0.9997i$  \\ \hline  
%
%\multirow{2}{3mm}{9} 
$\lambda_9$ & $0.0635 + 1.4026i$ & $0.0612 + 1.9048i$ & $0.0593 + 2.9065i$  \\ %    D
\rowcolor[gray]{0.9}
$\lambda^{D(1,3)}_{\rm app}$ & $0.0631 + 1.4027i$ & $0.0609 + 1.9049i$ & $0.0583 + 2.9075i$  \\ \hline  
%
%& Re & 0.0250 & 0.0250 & 0.0251 & 0.0246 \\ %    N
%& Im & 1.0318 & 1.0317 & 1.0315 & 1.0304 \\  \hline
%%
%\multirow{2}{3mm}{3} 
%& Re & 0.0409 & 0.0409 & 0.0410 & 0.0400 \\ %    N 
%& Im & 1.0160 & 1.0160 & 1.0158 & 1.0150 \\ \hline
%%
%\multirow{2}{3mm}{5} 
%& Re & 0.0501 & 0.0497 & 0.0498 & 0.0507 \\ %    D
%& Im & 1.4157 & 1.9162 & 2.9161 & 3.9241 \\ \hline
%%
%\multirow{2}{3mm}{7} 
%& Re & 0.0567 & 0.0567 & 0.0560 & 0.0534 \\ %    N
%& Im & 1.0003 & 1.0003 & 1.0000 & 3.4976 ~??? \\ \hline 
%%
%\multirow{2}{3mm}{9} 
%& Re & 0.0635 & 0.0612 & 0.0593 & 0.0534 \\ %    D
%& Im & 1.4026 & 1.9048 & 2.9065 & 3.4341 \\ \hline 
\end{tabular}
\end{center}
\caption{
Several eigenvalues of the operator $\mathcal A_{h,R}^N$ in the
circular annulus $\Omega_R = \{x\in\R^2~:~ 1 < |x|< R\}$ computed
numerically by diagonalizing the truncated matrix representation $-h^2
\Lambda + i{\mathcal B}$, for $h = 0.008$ and $R = 1.5, 2, 3\,$.  For
comparison, gray shadowed lines show the approximate eigenvalues from
Eqs. (\ref{eq:lambda_app}). }
\label{tab:eigenvalues}
\end{table}

%1  2.500557e-002  2.500119e-002  2.505210e-002  2.461641e-002     N
%1  1.031750e+000  1.031733e+000  1.031517e+000  1.030382e+000  
%3  4.087983e-002  4.087620e-002  4.094679e-002  3.995634e-002     N 
%3  1.016028e+000  1.016011e+000  1.015781e+000  1.015024e+000  
%5  5.011705e-002  4.965442e-002  4.975151e-002  5.073628e-002  
%5  1.415717e+000  1.916164e+000  2.916145e+000  3.924111e+000     D
%7  5.669231e-002  5.667244e-002  5.604929e-002  5.339809e-002     N
%7  1.000296e+000  1.000296e+000  9.999611e-001  3.497594e+000 ?
%9  6.354863e-002  6.120614e-002  5.926983e-002  5.340401e-002  
%9  1.402557e+000  1.904786e+000  2.906539e+000  3.434078e+000     D

\section{Conclusion}\label{s7}

While we have confined the discussion in this work to Dirichlet and
Neumann boundary conditions for simplicity,
we could have also treated the Robin case or the transmission case (see
\cite{AGH}) with $\Omega^+ = \mathbb R^d \setminus
\overline{\Omega^{^-}}$.  Note that we do not assume that $K$ is
connected.  In the case of the Dirichlet problem, the main theorem was
obtained in \cite[Theorem 1.1]{AlHen} under the stronger assumption
that, at each point $x$ of $ \Sg^D$, the Hessian of $ V_\partial:=
V_{/\partial \Omega}$ is positive definite if $\partial_\nu V (x) <
0\,$ or negative definite if $\partial_\nu V (x) > 0\,$, with
$\partial_\nu V:=\nu\cdot \nabla V$.  This additional assumption
reflects some technical difficulties in the proof, that was overcome
in \cite{AGH} by using tensor products of semigroups, a point of view
that was missing in \cite{AlHen}. \\
This generalization allows us to obtain the asymptotics of the left
margin of $\sigma(\A_h^\#)$, for instance, when $V(x_1,x_2)=x_1$ and
$\Omega$ is the exterior of a disk, where the above assumption is not
satisfied. \\
For this particular potential, an extension to the case when $\Omega$
is unbounded is of significant interest in the physics literature
\cite{Grebenkov17}. \\
 
{\bf Acknowledgements:}\\
Y. almog was partially supported by the NSF Grant DMS-1613471. He also
wishes to thank the department of Mathematics at the Technion and his
host Itai Shafrir for supporting his visit there during the period when this
manuscript was written. B. Helffer would like to thank Vincent
Bruneau for useful discussions.

\end{document}